\newtheorem*{definition*}{Definition}
\newtheorem*{invariant*}{Invariant}
\Crefname{claim}{Claim}{Claims}
\newcommand{\anita}[1]{\textcolor{Orange}{Anita: #1} }
\newcommand{\Ep}{E^+}
\newcommand{\En}{E^-}
\newcommand{\Cp}{C^+}
\newcommand{\EMO}{\textsc{EM-opt}}
\renewcommand{\subset}{\subseteq}
\newcommand{\tup}{(G, k, M^*, M)}
\newcommand{\mc}[1]{\mathcal #1}
\DeclareMathOperator*{\argmin}{arg\,min}
\newcommand{\problemtitle}[1]{\gdef\@problemtitle{#1}}
\newcommand{\probleminput}[1]{\gdef\@probleminput{#1}}
\newcommand{\problemtask}[1]{\gdef\@problemtask{#1}}
\newcommand{\problemoutput}[1]{\gdef\@problemoutput{#1}}
	\par\addvspace{.5\baselineskip}
	\par\addvspace{.5\baselineskip}
	\par\addvspace{.5\baselineskip}
	\par\addvspace{.5\baselineskip}
\crefname{cond}{Condition}{Conditions}
\crefname{obs}{Observation}{Observations}
\crefname{pty}{Property}{Properties}
\crefname{inv}{Invariant}{Invariants}
\DeclareMathAlphabet\mathbfcal{OMS}{cmsy}{b}{n}
\renewcommand{\theenumi}{\arabic{enumi}}
\renewcommand\p@enumii{\theenumi.}
\renewcommand\p@enumiii{\theenumi.\theenumi.}
\tikzset{
every picture/.style={
    line width=1.2pt, x=0.75pt,y=0.75pt,yscale=-1,xscale=1
    },
    >=stealth,
    orient/.style n args={3}{
        decoration={
            markings,
            mark=at position #3 with {
                        \arrow[line width=2pt]{#1};
            }
        }, postaction={decorate}},
    orient/.default={<}{1}{0.5}
}
\definecolor{mred}{RGB}{208, 2, 27 }
\definecolor{mgreen}{RGB}{126,211,33}
\title{An Approximation Algorithm for the Exact Matching Problem in Bipartite Graphs}
\titlerunning{Approximation of EM in Bipartite Graphs} 
\author{Anita {D\"{u}rr}}{Department of Computer Science, ETH Z\"{u}rich, Switzerland }{anita.durr@epfl.ch}{0000-0003-0440-5008}{}
\author{Nicolas {El Maalouly}}{Department of Computer Science, ETH Z\"{u}rich, Switzerland }{nicolas.elmaalouly@inf.ethz.ch}{0000-0002-1037-0203}{}
\author{Lasse {Wulf}}{Institute of Discrete Mathematics, TU Graz, Austria }{wulf@math.tugraz.at}{0000-0001-7139-4092}{Supported by the Austrian Science Fund (FWF): W1230.}
\authorrunning{A. D\"{u}rr, N. El Maalouly, L. Wulf} 
\keywords{Perfect Matching, Exact Matching, Approximation Algorithms, Bounded Color Matching} 
\pgfplotsset{compat=1.16}
\begin{document}
\maketitle

\begin{abstract}
    In 1982 Papadimitriou and Yannakakis introduced the \textsc{Exact Matching} problem, in which given a red and blue edge-colored graph $G$ and an integer $k$ one has to decide whether there exists a perfect matching in $G$ with exactly $k$ red edges. Even though a randomized polynomial-time algorithm for this problem was quickly found a few years later, it is still unknown today whether a deterministic polynomial-time algorithm exists. This makes the \textsc{Exact Matching} problem an important candidate to test the \textbf{RP}=\textbf{P} hypothesis.

    In this paper we focus on approximating \textsc{Exact Matching}. While there exists a simple algorithm that computes in deterministic polynomial-time an \emph{almost} perfect matching with exactly $k$ red edges, not a lot of work focuses on computing perfect matchings with \textit{almost} $k$ red edges. In fact such an algorithm for bipartite graphs running in deterministic polynomial-time was published only recently (STACS'23). It outputs a perfect matching with $k'$ red edges with the guarantee that $0.5k \leq k' \leq 1.5k$. In the present paper we aim at approximating the number of red edges without exceeding the limit of $k$ red edges. We construct a deterministic polynomial-time algorithm, which on bipartite graphs computes a perfect matching with $k'$ red edges such that $\frac{1}{3}k \leq k' \leq k$.
\end{abstract}

\section{Introduction}
In the \textsc{Exact Matching} problem (denoted as EM) one is given a graph $G$ whose edges are colored in red or blue and an integer $k$ and has to decide whether there exists a perfect matching $M$ in $G$ containing exactly $k$ red edges. This problem was first introduced in 1982 by Papadimitriou and Yannakakis \cite{Papadimitriou_Yannakakis} who conjectured it to be \textbf{NP}-complete. However a few years later, Mulmuley et al. \cite{matrix_inversion_vazirani} showed that the problem can be solved in randomized polynomial-time, making it a member of the class \textbf{RP}. This makes it unlikely to be \textbf{NP}-hard. In fact, while it is commonly believed that \textbf{RP}=\textbf{P}, the proof of this statement is a major open problem in complexity theory. Since EM is contained in \textbf{RP} but a deterministic polynomial-time algorithm for it is not known, the problem is a good candidate for testing the \textbf{\textbf{RP}}=\textbf{P} hypothesis.
Actually Mulmuley et al. \cite{matrix_inversion_vazirani} even showed that EM is contained in \textbf{RNC}, which is the class of decision problems that can be solved by a polylogarithmic-time algorithm using polynomially many parallel processors while having access to randomness. As studied in \cite{Svensson_parallel_computation_matching}, derandomizing the perfect matching problem from this complexity class is also a big open problem. This makes EM even more intriguing as randomness allows it to be efficiently parallelizable but we don't even know how to solve it sequentially without randomness.

Due to this, numerous works cite EM as an open problem. This includes the above mentioned seminal work on the parallel computation complexity of the matching problem \cite{Svensson_parallel_computation_matching}, planarizing gadgets for perfect matchings \cite{Gurjar_planarizing_gadgets_for_PM}, multicriteria optimization \cite{multicriteria_optimization}, matroid intersection \cite{matroid_intersection}, DNA sequencing \cite{DNA_sequencing}, binary linear equation systems with small hamming weight \cite{linear_equ_hamming}, recoverable robust assignment \cite{robust_assignment} as well as more general constrained matching problems \cite{budgetMatchingPTAS,Mastrolilli_Stamoulis_12,Stamoulis,Mastrolilli_Stamoulis_14}. Progress in finding deterministic algorithms for EM has only been made for very restricted classes of graphs, thus illustrating the difficulty of the problem. There exists a deterministic polynomial-time algorithm for EM in planar graphs and more generally in $K_{3, 3}$-minor free graphs \cite{Yuster_Almost_EM}, as well as for graphs of bounded genus \cite{Galluccio_Loebl}. Additionally, standard dynamic programming techniques on the tree decomposition of the graph can be used to solve EM on graphs of bounded treewidth \cite{Maalouly23_stacs, vardi2022quantum}. Contrary to those classes of sparse graphs, EM on dense graphs seems to be harder to solve. At least 4 articles study solely complete and complete bipartite graphs \cite{karzanov1987maximum,geerdes2011unified,Yi_Murty_Spera,Gurjar_Korwar_Messner_Thierauf}. More recently those results were generalized in \cite{Maalouly_Steiner_22} with a polynomial-time algorithm solving EM for constant independence number\footnote{The independence number of a graph $G$ is defined as the largest number $\alpha$ such that $G$ contains an independent set of size $\alpha$.} $\alpha$ and constant bipartite independence number\footnote{The bipartite independence number of a bipartite graph $G$ is defined as the largest number $\beta$ such that $G$ contains a balanced independent set of size $2\beta$, i.e.\  an independent set using exactly $\beta$ vertices from each partition.} $\beta$. These algorithm have \textbf{XP}-time, i.e.\ they run in time $O(g(\alpha) n^{f(\alpha)})$ and $O(g(\beta) n^{f(\beta)})$. This was further improved for bipartite graphs in \cite{Maalouly_Steiner_Wulf} where the authors showed an FPT algorithm parameterized by the bipartite independence number $\beta$ solving EM on bipartite graphs.

\paragraph*{Approximating EM.}
Given the unknown complexity status of EM, it is natural to try approximating it. In order to approximate EM, we need to allow for non-optimal solutions. Here two directions can be taken: either we consider non-perfect matchings, or we consider perfect matchings with not exactly $k$ red edges. Yuster \cite{Yuster_Almost_EM} took the first approach and showed an algorithm that computes an \emph{almost perfect exact matching} (a matching with exactly $k$ red edges and either $\frac n 2$  or $\frac n 2 - 1$ total edges\footnote{$n$ is the number of vertices in the graph.}). This is as close as possible for this type of approximation. However, there are two things to consider with Yuster's algorithm: First, the algorithm itself is very simple. Secondly, as long as a trivial condition on $k$ is met, such an almost perfect exact matching always exists. It is surprising that this approximation is achieved by such simple methods in such generality. For the closely related \emph{budgeted matching} problem, more sophisticated methods have been used recently to achieve a PTAS \cite{budgetMatchingPTAS} and efficient PTAS \cite{budgetMatchingEPTAS}. These methods also do not guarantee to return a perfect matching.

The alternative is to find approximations which always return a perfect matching, but with possibly the wrong number of red edges. These kinds of approximations seem considerably more difficult to tackle. It is puzzling how little is known about this type of approximation, while the other type of approximation admits much stronger results. Very recently, El Maalouly \cite{Maalouly23_stacs} presented a polynomial-time algorithm which for bipartite graphs outputs a perfect matching containing $k'$ red edges with $0.5k \leq k' \leq 1.5k$. To the best of our knowledge, this is the only result in this direction, despite the original EM problem being from 1982. Moreover, note that this algorithm can return both a perfect matching with too many or too few red edges. Hence El Maalouly's algorithm is not an approximation algorithm in the classical sense, as it only guarantees a two-sided, but not a one-sided, approximation.

\paragraph*{Our contribution.}

We consider the problem of approximating EM, such that a perfect matching is always returned, and such that the approximation is one-sided. We show the first positive result of this kind. Formally, we consider the following optimization variant of EM:

\begin{problem}
\problemtitle{\textsc{Exact Matching Optimization (EM-opt)}}
\probleminput{A graph $G$ whose edges are colored red or blue; and an integer $k$.}
\problemtask{Maximize $|R(M)|$ subject to the constraint that $M$ is a perfect matching in $G$ and $|R(M)| \leq k$, where $R(M)$ is the set of red edges in $M$. }
\end{problem}

We consider bipartite graphs and prove:
\begin{theorem}\label{thm:3approx}
There exists a deterministic polynomial-time 3-approximation for \textsc{EM-opt} in bipartite graphs.
\end{theorem}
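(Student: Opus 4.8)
The plan is to reduce \textsc{EM-opt} to polynomially many instances of a problem that we can solve exactly in deterministic polynomial time, namely finding a perfect matching that avoids a prescribed number of red edges in a controlled way. First I would handle the trivial cases: if $G$ has no perfect matching at all, report infeasibility; and if the all-blue-preferring matching already has $0$ red edges we must still check whether some perfect matching uses between $1$ and $k$ red edges, so the interesting regime is when the graph is "red-heavy''. The key structural idea is to start from \emph{any} perfect matching $M_0$ (computable deterministically) and repair it: if $|R(M_0)| \le k$ we are within a factor depending on how much smaller than $k$ it is, and if $|R(M_0)| > k$ we want to flip alternating cycles to decrease the red count. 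The engine behind this is that, given two perfect matchings $M$ and $M'$, their symmetric difference decomposes into alternating cycles, and along each such cycle we can interpolate between the red counts of $M$ and $M'$ in steps whose size is bounded by the length of the cycle. This lets us hit any target value of red edges that lies between $|R(M)|$ and $|R(M')|$ up to an additive error governed by the longest cycle.

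Concretely, I would compute a perfect matching $M_{\min}$ minimizing the number of red edges (this is a min-cost perfect matching, solvable deterministically in polynomial time in bipartite graphs) and a perfect matching $M_{\max}$ maximizing it. If $|R(M_{\min})| > k$, the instance is infeasible and we may output $M_{\min}$ only if the problem allows exceeding — but since \textsc{EM-opt} forbids it, we instead report that no feasible matching exists, or rather we observe $|R(M_{\min})|$ is a lower bound on any perfect matching's red count, so when it exceeds $k$ there is no feasible solution and the approximation guarantee is vacuous. In the main case $|R(M_{\min})| \le k \le |R(M_{\max})|$: take the symmetric difference $M_{\min} \triangle M_{\max}$, process its alternating cycles one at a time, and greedily flip cycles as long as doing so keeps the red count at most $k$. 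When we can no longer flip a full cycle without overshooting $k$, the current matching $M$ has $|R(M)| \le k$, and the next cycle $C$ we were about to flip would push it above $k$; but flipping a \emph{sub-path} of $C$ (an alternating path inside the cycle) changes the red count by at most a controlled amount, and in fact within the cycle the red count takes all intermediate values with gaps of size at most $2$ (each pair of consecutive edges swapped changes the count by at most one red edge in, one out). Hence we can bring $|R(M)|$ to within an additive constant of $k$ — but an additive guarantee is not a $3$-approximation, so this needs refinement.

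To get the multiplicative $3$ factor, I would not aim at $k$ directly but argue that the greedy flipping produces $|R(M)| \ge \lceil k/3 \rceil$, using the following case distinction on the blocking cycle $C$. If $C$ is "long'' — i.e., flipping it would add more than $2(k - |R(M)|)$ red edges — then before reaching $C$ we had already accumulated a red count close to $k$; quantifying this, the total red increments we skipped are bounded, and since each cycle contributes at most as many red edges as half its length, a counting argument over all cycles shows $|R(M)|$ cannot be much below $k$. If instead every remaining cycle is "short,'' then we can top up $|R(M)|$ by flipping partial cycles to land in $[\tfrac{k}{3}, k]$. The main obstacle — and the part I expect to require the most care — is precisely this quantitative trade-off: bounding how far below $k$ the greedy procedure can stall, and in particular ruling out the bad scenario where a single enormous alternating cycle sits between $M_{\min}$ and $M_{\max}$ and contributes almost all of the red edges, so that flipping it overshoots wildly while not flipping it leaves us near $|R(M_{\min})|$. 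Handling that case will likely require decomposing the long cycle into shorter alternating segments and re-routing through a carefully chosen intermediate perfect matching, or an amortized argument that charges the red deficit against the blue edges necessarily used elsewhere. Once the stalling bound of the form $|R(M)| \ge k/3$ is established, the theorem follows, since the optimum is at most $k$.
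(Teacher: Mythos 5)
There is a genuine gap, and it is precisely where you suspect it might be. The central claim that breaks is: ``flipping a sub-path of $C$ (an alternating path inside the cycle) changes the red count by at most a controlled amount, and in fact within the cycle the red count takes all intermediate values with gaps of size at most $2$.'' This is false, because $M \Delta P$ is not a perfect matching when $P$ is a proper sub-path of an $M$-alternating cycle: if $P$ starts and ends with matched edges, $M \Delta P$ leaves the two endpoints of $P$ unmatched; if $P$ starts and ends with unmatched edges, $M \Delta P$ matches the two endpoints twice; and in the mixed case one endpoint is over-covered or under-covered. The only alternating sets $X$ for which $M \Delta X$ is again a perfect matching are vertex-disjoint unions of $M$-alternating \emph{cycles}. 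So there is no interpolation along a single cycle; you are forced to accept or reject a whole cycle at once, and the jump in $|R(M)|$ from a single cycle can be arbitrarily large relative to $k$. This is exactly the scenario you yourself flag (``a single enormous alternating cycle \dots contributes almost all of the red edges, so that flipping it overshoots wildly''), and the proposal offers no machinery to handle it: a single $M$-alternating cycle cannot in general be split into shorter $M$-alternating cycles, and ``re-routing through a carefully chosen intermediate perfect matching'' is not backed by any construction or argument.

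A second, more subtle issue is that you restrict attention to cycles appearing in $M_{\min} \Delta M_{\max}$. The paper's algorithm does not do this: it searches for a \emph{positive} cycle $C$ in an auxiliary directed weighted graph $G_M$ subject to a budget $|E^+(C)| \le \tfrac23 k$, using a Bellman--Ford-style dynamic program. Such a cycle may use edges that are in neither $M_{\min}$ nor $M_{\max}$, and this extra freedom is essential. Even with that extra power, the algorithm can still get stuck, and the paper then switches to a completely different move: for every red edge $e$ it computes the minimum-red perfect matching $M^e$ forced to contain $e$, and shows (via a long structural argument about ``critical tuples'' and ``target sets'' of non-positive cycles, Lemmas~\ref{lem:good_Me}--\ref{lem:cycle_fusion}) that one of these $M^e$ must land in $[\tfrac13 k, k]$. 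Nothing resembling this second phase, or the structural lemma justifying it, appears in your proposal. The high-level idea of iteratively augmenting a matching along alternating cycles is shared with the paper, but the two ingredients that actually make the $3$-approximation go through --- the budget-constrained positive-cycle search in $G_M$, and the fallback to $M^e$ with a nonexistence proof for critical tuples --- are both missing.
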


We remark that there are three kinds of input instances to \textsc{EM-opt} : First, instances which are also YES-instances of EM, i.e.\ a perfect matching with $k$ red edges exists. For those instances, our algorithm returns a perfect matching with $k'$ red edges and $\frac 1 3 k \leq k' \leq k$. Second, there are instances, where \textsc{EM-opt} is infeasible, in the sense that all perfect matchings have $k + 1$ or more red edges. For such instances, our algorithm returns ``infeasible''. Finally, there are instances which are feasible, but the optimal solution to \textsc{EM-opt} has $k^*$ red edges with $k^* < k$. For such instances, our algorithm returns a perfect matching with $k'$ red edges and $\frac 1 3 k^* \leq k' \leq k^*$.

The main idea behind our approximation algorithm is to start with an initial perfect matching, and then repeatedly try a small improvement step using dynamic programming. Whenever a small improvement step is not possible, we prove that the graph is ``rigid'' in a certain sense. We can prove that as a result of this rigidity, it is a good idea to guess a single edge $e$ and to enforce that this edge $e$ is contained in the solution matching. This paper contains concepts and lemmas, which make these ideas formal and may help for the development of future approximation algorithms. Our new ideas are based on a geometric intuition about the cycles appearing in such a ``rigid'' graph.


\section{Preliminaries}
\subsection{Definitions and notations}
All graphs considered are simple. Given a graph $G=(V, E)$, a perfect matching $M$ of $G$ is a subset of $E$ such that every vertex in $V$ is adjacent to exactly one edge in $M$. We introduce two tools related to $M$: the weight function $w_M$ and the weighted directed graph $G_M$. These tools were first introduced in \cite{Maalouly_Steiner_22,Maalouly23_stacs}. They can be used to reason about EM, as is explained further below.

\begin{definition*}[Weight function $w_M$]
Given a graph $G=(V, E)$ whose edges are colored red or blue and a perfect matching $M$ on $G$, we define the weight function $w_M$ on the edges of $G$ as follows:
$$
w_M(e) = \left \{
	\begin{array}{ll}
		0  & \mbox{if } e \mbox{ is a blue edge} \\
		-1 & \mbox{if } e \in M \mbox{ is a red edge} \\
        +1 & \mbox{if } e \notin M \mbox{ is a red edge}
	\end{array}
\right.
$$
\end{definition*}
\begin{definition*}[Oriented and edge-weighted graph $G_M$]
Given a bipartite graph $G=(A \sqcup B, E)$ whose edges are colored red or blue and a perfect matching $M$ on $G$, the graph $G_M$ is the oriented and edge-weighted graph such that:
\begin{itemize}
    \item the vertex set of $G_M$ is the vertex set $A \sqcup B$ of $G$;
    \item the edge set of $G_M$ is the edge set of $G$ such that edges in $M$ are oriented from $A$ to $B$ and edges not in $M$ are oriented from $B$ to $A$;
    \item edges in $G_M$ are weighted according to the weight function $w_M$.
\end{itemize}
\end{definition*}

For ease of notation, we identify subgraphs of $G$ and $G_M$ with their edge sets. Any reference to their vertices will be made explicit.
Since every edge in $G_M$ has an undirected unweighted version in $G$, every subgraph $H_M$ in $G_M$ can be associated to the subgraph $H$ in $G$ of corresponding undirected and unweighted edge set, and vice versa. With a slight abuse of notation, we therefore sometimes do not differentiate $H$ of $H_M$ and denote them by the same object.
If $E_1$ and $E_2$ are two edge sets, then we denote by $E_1 \Delta E_2 := (E_1
\setminus E_2) \cup (E_2 \setminus E_1)$ their symmetric difference.
Let $H$ be a subgraph of $G_M$. We use the following notations:
\begin{itemize}
    \item $R(H)$ denotes the set of red edges in $H$.
    \item $\Ep(H)$ denotes the set of positive edges in $H$ (i.e.\ the red edges in $H$ not contained in $M$).
    \item $\En(H)$ denotes the set of negative edges in $H$ (i.e.\ the red edges in $H$ contained in $M$).
    \item $w_M(H)$ is the weight of $H$ with respect to the edge weight function $w_M$, i.e.\ $w_M(H) = |\Ep(H)| - |\En(H)|$.
\end{itemize}
Unless stated otherwise, all considered cycles and paths in $G_M$ are directed.
Finally if $C$ is a directed cycle in $G_M$, and $P_1
\subset C$ and $P_2 \subset C$ are sub-paths on $C$, we denote by $C[P_1, P_2]$ (respectively $C(P_1, P_2)$) the sub-path in $C$ from $P_1$ to $P_2$ included (resp. excluded). If $w_M(C) > 0$ then we call $C$ a \emph{positive} cycle and if $w_M(C) \leq 0$ we call $C$ a \emph{non-positive} cycle.
A \emph{walk} in a directed graph is a sequence of edges $(e_1,\dots,e_t)$ such that the end vertex of $e_i$ is the start vertex of $e_{i+1}$ for $i=1,\dots,t-1$. In contrast to a path, a walk may visit vertices and edges twice. A walk is \emph{closed}, if its start and end vertex are equal.

\subsection{Observations on $G_M$}
\label{subsec:observations-G-M}
In this subsection, we explain some simple observations, which we use later to reason about our approximation algorithm. Let $G$ be a bipartite graph and $M$ be a perfect matching of $G$. A cycle in $G$ is said to be \emph{$M$-alternating} if for any two adjacent edges in the cycle, one of them is in $M$ and the other is not. It is a well-known fact that if $M$ and $M'$ are two perfect matchings, then $M \Delta M'$ is a set of vertex-disjoint cycles
that are both $M$-alternating and $M'$-alternating. We now make the following important observations on the bipartite graphs $G$ and $G_M$ and the weight function $w_M$.

\begin{observation}\label{obs:directed_M_alternating}
A cycle $C$  in $G$ is  $M$-alternating if and only if the corresponding cycle $C_M$ in $G_M$ is directed.
\end{observation}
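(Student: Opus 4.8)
The plan is to prove both implications by a purely local argument at each vertex of the cycle, using two facts: first, that in the bipartite graph $G$ the vertices along any cycle alternate between $A$ and $B$; and second, that since $M$ is a perfect matching, every vertex is incident to exactly one edge of $M$. The convenient bookkeeping lemma to isolate first is that the orientation of an edge $e=\{a,b\}$ with $a\in A$, $b\in B$ in $G_M$ depends only on whether $e\in M$: if $e\in M$ then $e$ points $a\to b$, and if $e\notin M$ then $e$ points $b\to a$. Rephrased vertex-locally: at a vertex $v\in A$ the unique incident $M$-edge points \emph{out} of $v$ and every incident non-$M$-edge points \emph{into} $v$, while at a vertex $v\in B$ this is reversed.

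For the forward direction, assume $C$ is $M$-alternating. Fix any vertex $v$ on $C$ with incident cycle edges $e,e'$; by the alternating property exactly one of them, say $e$, lies in $M$. If $v\in A$, the lemma gives that $e$ leaves $v$ and $e'$ enters $v$, so $v$ has in-degree $1$ and out-degree $1$ in $C_M$; if $v\in B$ the roles of $e,e'$ are swapped but the conclusion is identical. Since this holds at every vertex, $C_M$ is a $2$-regular subgraph in which every vertex has in- and out-degree $1$, i.e.\ a directed cycle.

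For the converse, assume $C_M$ is directed and consider two consecutive edges $e$ (entering $v$) and $e'$ (leaving $v$). If $v\in A$, then by the lemma an edge entering an $A$-vertex is a non-$M$-edge and an edge leaving an $A$-vertex is an $M$-edge, so $e\notin M$ and $e'\in M$; if $v\in B$ the analogous statement forces $e\in M$ and $e'\notin M$. In either case the two cycle edges at $v$ differ in their $M$-membership, so $C$ alternates at $v$; as $v$ was arbitrary, $C$ is $M$-alternating.

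I do not expect a real obstacle here: the statement is essentially a routine case distinction. The only points requiring care are to state the "orientation is determined by $M$-membership together with the $A$/$B$-side of the endpoint" lemma cleanly once and then reuse it in both directions, and to invoke the bipartiteness of $G$ explicitly so that at each vertex of $C$ the only two cases are "$v\in A$" and "$v\in B$".
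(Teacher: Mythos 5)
Your proof is correct and takes the same approach as the paper, which disposes of the statement in a single sentence ("This follows directly from the definition of $G_M$"); you have simply unpacked that one-liner into the explicit vertex-local case analysis ($v\in A$ versus $v\in B$, $e\in M$ versus $e\notin M$) that the paper leaves implicit.
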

\begin{proof}
This follows directly from the definition of $G_M$.
\end{proof}

\begin{observation}\label{obs:rMDC}
Let $C$ be a directed cycle in $G_M$. Then $M':= M \Delta C$ is a perfect matching whose number of red edges is $|R(M')| = |R(M)| + w_M(C)$.
\end{observation}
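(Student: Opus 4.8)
The plan is to separate the statement into two parts: first that $M' = M \Delta C$ is a perfect matching, and second the red-edge count.

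For the first part, I would invoke \Cref{obs:directed_M_alternating}: since $C$ is a directed cycle in $G_M$, the corresponding cycle in $G$ is $M$-alternating. It is a standard fact (already mentioned in the text, and easy to check vertex by vertex) that the symmetric difference of a perfect matching with an $M$-alternating cycle is again a perfect matching, since along the cycle every vertex simply swaps which of its two incident cycle-edges is used, and vertices off the cycle are untouched. So $M'$ is a perfect matching.

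For the count, I would write $M' = (M \setminus C) \cup (C \setminus M)$ as a disjoint union and split the red edges accordingly: $R(M') = \bigl(R(M) \setminus C\bigr) \cup \bigl(R(C) \setminus M\bigr)$. The second set is exactly $\Ep(C)$ by definition (red edges of $C$ not in $M$). For the first set, note $R(M) \cap C$ consists precisely of the red edges of $M$ lying on $C$, which is exactly $\En(C)$; hence $|R(M)\setminus C| = |R(M)| - |\En(C)|$. Since the two contributing sets are disjoint, $|R(M')| = |R(M)| - |\En(C)| + |\Ep(C)| = |R(M)| + w_M(C)$, using the definition $w_M(C) = |\Ep(C)| - |\En(C)|$.

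There is no real obstacle here; the only thing to be careful about is the bookkeeping of the symmetric difference — in particular making sure blue edges are correctly ignored (a blue cycle edge leaving or entering $M$ changes neither $|R(M')|$ nor $w_M(C)$, consistent with its weight $0$) and that $R(M)\cap C = \En(C)$ uses that $C$ is $M$-alternating so every $M$-edge on $C$ is indeed a cycle edge of $G_M$. I expect this to be a short paragraph-length argument.
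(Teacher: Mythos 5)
Your proof is correct and follows essentially the same two-step route as the paper (first, $M$-alternating cycles preserve perfect matchings; second, count red edges via $w_M$); you just make the set-theoretic bookkeeping more explicit while the paper argues edge-by-edge. No gap.
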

\begin{proof}
    Since $C$ is $M$-alternating, it is a well-known fact that $M'$ is again a perfect matching. The equation $|R(M')| = |R(M)| + w_M(C)$, follows directly from the definition of $w_M$. If an edge in $C$ is blue, it does not change the amount of red edges of $M$. If an edge in $C$ is red, it changes the number of red edges of $M$ by $\pm 1$, depending on whether or not it is in $M$.
\end{proof}

\begin{observation}\label{obs:intersectingC}
    Let $C_1$ and $C_2$ be two directed cycles in $G_M$ that intersect at a vertex $v$. Then $C_1$ and $C_2$ also intersect on an edge adjacent to $v$.
\end{observation}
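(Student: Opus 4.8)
The plan is to use the key structural fact that $G_M$ is a bipartite graph in which the orientation of every edge is determined by which side of the bipartition it starts on, together with the fact that cycles $C_1, C_2$ are directed. Concretely, since $v$ lies on both cycles and each $C_i$ is a directed cycle, $v$ has exactly one incoming and one outgoing edge within $C_i$; call them $\mathrm{in}_i$ and $\mathrm{out}_i$ for $i \in \{1, 2\}$. I want to argue that $\{\mathrm{in}_1, \mathrm{out}_1\}$ and $\{\mathrm{in}_2, \mathrm{out}_2\}$ cannot be disjoint, which immediately gives a shared edge at $v$.

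First I would observe that $v$ lies in exactly one of the two parts, say $v \in A$ (the case $v \in B$ is symmetric). By the definition of $G_M$, every edge incident to $v$ that is directed \emph{into} $v$ must be a non-matching edge oriented from $B$ to $A$, and every edge directed \emph{out of} $v$ must be the matching edge at $v$ oriented from $A$ to $B$. But $M$ is a perfect matching, so there is exactly one matching edge incident to $v$ — call it $m_v$. Hence \emph{every} directed cycle through $v$ must leave $v$ along $m_v$: in particular $\mathrm{out}_1 = \mathrm{out}_2 = m_v$. This shared outgoing edge $m_v$ is an edge adjacent to $v$ lying on both $C_1$ and $C_2$, which is exactly the claim.

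The main thing to be careful about is the direction bookkeeping: one must check that the definition of $G_M$ really forces all matching edges out of $A$-vertices and all non-matching edges into $A$-vertices, so that the out-degree of any $A$-vertex in any directed subgraph is capped by $1$ (its unique matching edge). Given Observation~\ref{obs:directed_M_alternating} and the definition of $G_M$ this is immediate, so I do not expect any real obstacle; the argument is essentially a counting argument on the single matching edge at $v$. (If $v \in B$, the same reasoning applies with ``incoming'' in place of ``outgoing'': every directed cycle through a $B$-vertex must \emph{enter} along that vertex's unique matching edge, so $\mathrm{in}_1 = \mathrm{in}_2 = m_v$.)
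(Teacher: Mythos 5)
Your proof is correct and is essentially the same as the paper's: both boil down to the observation that every directed cycle through $v$ must use the unique matching edge $m_v$ incident to $v$. The paper phrases this via \cref{obs:directed_M_alternating} ($M$-alternation forces each cycle to contain exactly one $M$-edge at $v$), while you derive it directly from the orientation convention of $G_M$, but the underlying counting argument on $m_v$ is identical.
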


\begin{proof}
    By \cref{obs:directed_M_alternating}, $C_1$ and $C_2$ are $M$-alternating so they both contain exactly one adjacent edge to $v$ that is also contained in $M$.
    Since $M$ is a perfect matching, there exists a single adjacent edge to $v$ in $M$. Thus $C_1$ and $C_2$ both contain this edge.
\end{proof}

\subsection{Previous ideas}

The first part of our algorithm for \cref{thm:3approx} presented in \cref{sec:proof_thm} is strongly inspired by the algorithm in \cite[Theorem 1]{Maalouly23_stacs} that computes a perfect matching with between $0.5k$ and $1.5k$ red edges. We summarize the main ideas here.

One can compute a perfect matching of minimum number of red edges in polynomial time by using a maximum weight perfect matching algorithm (e.g.\ \cite{edmonds1965maximum}) on the graph $G$ where red edges have weight $0$ and blue edges have weight $+1$.
The idea in \cite{Maalouly23_stacs} is to start with a perfect matching of minimum number of red edges $M$. If $0.5k \leq |R(M)| \leq 1.5k$ then we are already done and can output $M$. Otherwise $|R(M)| < 0.5k$ (it cannot have more than $k$ red edges by minimality). In that case the algorithm iteratively improves $M$ using positive cycles. Indeed, assume for a moment that we could find a directed cycle $C$ in $G_M$ such that both $w_M(C) >0$ and $|\Ep(C)| \leq k$, then the perfect matching $M' := M \Delta C$ is such that
\begin{align*}
    |R(M)| < |R(M')| &= |R(M)| + w_M(C) \\
    &\leq |R(M)| + |\Ep(C)| < 0.5k + k = 1.5k.
\end{align*}
Thus we can iteratively compute such cycles $C$ and replace $M$ by $M \Delta C$. We make true progress every time until $0.5k \leq |R(M)| \leq 1.5k$.

An important observation proven in \cite{Maalouly23_stacs} is that we can determine in polynomial time if such a cycle exists. We recall this result in \cref{prop:recycling}.
\begin{proposition}[Adapted from {\cite[Proposition 11]{Maalouly23_stacs}}]\label{prop:recycling}
Let $G:= (V, E)$ be an edge-weighted directed graph and $t \in \mathbb{N}$ be a parameter. There exists a deterministic polynomial-time algorithm that, given $G$, determines whether or not there exists a directed cycle $C$ in $G$ with $w(C) >0$ and $|\Ep(C)| \leq t$. If such a cycle $C$ exists then the algorithm also outputs a cycle $C'$ with the same properties as $C$ (i.e.\  $w(C') > 0$ and $|E^+(C')| \leq t$).
\end{proposition}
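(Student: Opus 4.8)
The plan is to reduce the problem to a longest-walk computation in an auxiliary \emph{layered} graph, where the layer index records how many positive-weight edges have been used so far. Throughout, call an edge $e$ \emph{positive} if $w(e) > 0$ and \emph{non-positive} otherwise, so that $|\Ep(C)|$ is the number of positive edges on $C$; we may assume $t \leq |V|$ since any directed cycle has at most $|V|$ edges. Build the layered graph $G^{(t)}$ on vertex set $V \times \{0,1,\dots,t\}$ as follows: for every non-positive edge $u \to v$ of $G$ of weight $w$ and every $i \in \{0,\dots,t\}$ insert an edge $(u,i) \to (v,i)$ of weight $w$; for every positive edge $u \to v$ of weight $w$ and every $i \in \{0,\dots,t-1\}$ insert an edge $(u,i) \to (v,i+1)$ of weight $w$. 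Forgetting the second coordinate maps any directed walk in $G^{(t)}$ to a directed walk in $G$, and the endpoint's layer equals the start layer plus the number of positive edges traversed.

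The key claim is that $G$ has a directed cycle $C$ with $w(C) > 0$ and $|\Ep(C)| \leq t$ if and only if $G^{(t)}$ has, for some $v \in V$ and some $j \in \{1,\dots,t\}$, a directed walk from $(v,0)$ to $(v,j)$ of positive total weight. The forward direction is immediate: pick any vertex $v$ on $C$ and traverse $C$ once starting at $(v,0)$; this ends at $(v,|\Ep(C)|)$ with total weight $w(C) > 0$. For the converse, such a walk projects to a closed walk $W$ in $G$ using exactly $j$ positive edges counted with multiplicity and with $w(W) > 0$; decompose $W$ into a multiset of simple directed cycles $C_1,\dots,C_m$ (the standard fact that a closed walk is an edge-disjoint union of simple cycles). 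Then $\sum_i w(C_i) = w(W) > 0$, so some $C_i$ has $w(C_i) > 0$, and since the positive edges of the $C_i$ partition (with multiplicity) the positive edges of $W$, we get $|\Ep(C_i)| \leq j \leq t$. This $C_i$ is the cycle to output, and it is obtained constructively from $W$.

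It remains to detect the required walk efficiently, and here the crucial observation is that $G^{(t)}$ contains \emph{no} positive-weight cycle: a cycle in $G^{(t)}$ cannot change layers because the layer-increasing edges strictly increase the second coordinate, so it lies entirely within one layer, hence uses only edges arising from non-positive edges of $G$, hence has weight $\leq 0$. Therefore longest-walk distances in $G^{(t)}$ from any fixed source are well-defined and computable by the longest-path variant of the Bellman--Ford algorithm. The algorithm is then: for each $v \in V$, run Bellman--Ford from $(v,0)$ and check whether the computed distance to $(v,j)$ is positive for some $j \geq 1$; if so, recover the walk via the predecessor pointers, project it to $G$, decompose it into cycles, and output a positive one; otherwise report that no such cycle exists. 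Since $G^{(t)}$ has $O(|V|^2)$ vertices and $O(|E|\,|V|)$ edges and we run $|V|$ Bellman--Ford computations, the running time is polynomial, and correctness follows from the key claim.

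The main obstacle is not any single difficult step but calibrating the reduction precisely. On the one hand, longest walks are NP-hard in general, so one must exploit that the layered construction destroys every positive cycle in order to make Bellman--Ford applicable; on the other hand, a positive walk in $G^{(t)}$ yields only a closed walk rather than a simple cycle, so one needs the walk-to-cycles decomposition together with the bookkeeping that the positive-edge budget $t$ is respected \emph{with multiplicity}, which is exactly what guarantees that one of the extracted simple cycles still satisfies $|\Ep(C_i)| \leq t$.
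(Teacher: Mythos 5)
Your proof is correct and takes essentially the same approach as the paper's sketch: both reduce the problem to a Bellman--Ford computation in which the state is augmented by a budget counting how many positive-weight edges have been traversed. Your layered graph $G^{(t)}$ is precisely the graphical realization of that budget-augmented DP, the sign flip versus longest-walk phrasing is a cosmetic difference, and using the target layer $j \geq 1$ to rule out the trivial closed walk plays the role of the paper's ``track the last edge'' device; the two details you spell out explicitly (that $G^{(t)}$ has no positive cycle, so longest walks are well-defined, and that a positive closed walk decomposes into simple cycles one of which inherits both the positive weight and the budget bound) are exactly what the paper's high-level sketch leaves to the cited source.
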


The idea of \cref{prop:recycling} is to flip the sign of all weights so that we are looking for a negative cycle. Those can be found by shortest path algorithms: for example the Bellman-Ford algorithm which relies on a dynamic program (DP). Each entry of this DP contains the shortest distance between two vertices. The DP can be adapted so that it contains an additional budget constraint  corresponding to the number of negative edges (i.e. positive before we flip the sign of the weight) a path is allowed to use.
We also add in the entry of the DP the last edge used in the path. This way, when looking at the entries corresponding to the shortest path from a vertex $v$ to the same vertex $v$ we are able to determine whether or not there exists a negative cycle that uses no more than $t$ negative edges (i.e. positive before we flip the sign of the weight). If it is the case then the algorithm can output such a cycle.
We redirect the reader for a detailed proof of Proposition 11 in \cite[Section 3.1]{Maalouly23_stacs}. We finally remark that the proof in \cite{Maalouly23_stacs} is completed by showing that as long as $M$ does not have between $0.5k$ and $1.5k$ red edges, a cycle $C$ with $w_M(C) > 0$ and $|\Ep(C)| \leq k$ always exists. However, to get a one-sided approximation, we need to guarantee the existence of a cycle $C$ with $w_M(C) > 0$ and $|\Ep(C)| \leq k - |R(M)|$. This might not be possible for certain graphs if $|R(M)| > 0$. This means that the above result is not enough for our purpose, as it only leads to a two-sided approximation.

\subsection{Assumptions}\label{sec:assumptions}
In order to reduce the technical details needed to present our 3-approximation algorithm of $\textsc{EM-opt}$, we show in this subsection that several simplifying assumptions can be made about the input instance.

\begin{claim}\label{claim:feasibility}
    We can assume without loss of generality that the input instance to $\textsc{EM-opt}$ is also a YES-instance of EM, i.e.\ there exists a perfect matching with exactly $k$ red edges.
\end{claim}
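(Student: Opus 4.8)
The plan is to prove \cref{claim:feasibility} as a reduction: I will argue that any polynomial-time algorithm that correctly handles YES-instances of EM can be bootstrapped, without modifying the input graph and at the cost of only polynomially many additional calls, into an algorithm that handles arbitrary instances of \textsc{EM-opt}. First I would dispose of the two degenerate situations. By a single maximum-weight perfect matching computation (weight $1$ on blue edges and $0$ on red edges, see \cite{edmonds1965maximum}) one can in polynomial time either certify that $G$ has no perfect matching at all, or compute a perfect matching $M_0$ minimizing the number of red edges, say $k_{\min} := |R(M_0)|$. If $G$ has no perfect matching, or if $k_{\min} > k$, then every perfect matching of $G$ uses more than $k$ red edges, \textsc{EM-opt} is infeasible, and the algorithm correctly returns ``infeasible''. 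From now on I may therefore assume that $G$ has a perfect matching and $0 \le k_{\min} \le k$; in this case the instance is feasible. Let $k^\ast$ denote its optimum, i.e.\ the largest number of red edges in a perfect matching of $G$ that uses at most $k$ red edges; then $k_{\min} \le k^\ast \le k$, the maximum defining $k^\ast$ is attained by some perfect matching, and consequently $(G,k^\ast)$ is a YES-instance of EM whose optimal \textsc{EM-opt} solutions coincide with those of $(G,k)$.

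The key point is that we do not know how to compute $k^\ast$ in polynomial time---in fact being able to do so would decide EM---so instead I would try all candidate targets. Concretely, let $\mathcal{A}$ be the polynomial-time algorithm that the remainder of the paper constructs, with the guarantee that on every YES-instance $(G,t)$ of EM it outputs a perfect matching of $G$ with $t'$ red edges for some $\tfrac13 t \le t' \le t$; since $\mathcal{A}$ is a concrete polynomial-time procedure it also halts in polynomial time on all other inputs, merely with a possibly meaningless output. The reduction runs $\mathcal{A}$ on $(G,t)$ for every $t \in \{0,1,\dots,k\}$, discards each output that is not a genuine perfect matching of $G$ with at most $k$ red edges (which is checkable in polynomial time), and returns a surviving perfect matching with the most red edges.

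For correctness, note that among the targets tried is $t = k^\ast$, and $(G,k^\ast)$ is a YES-instance of EM by the discussion above; hence $\mathcal{A}(G,k^\ast)$ returns a perfect matching with $t'$ red edges where $\tfrac13 k^\ast \le t' \le k^\ast \le k$, which is a feasible \textsc{EM-opt} solution for $(G,k)$ and survives the filtering step. Every surviving output is a feasible solution and therefore has at most $k^\ast$ red edges (as $k^\ast$ is the optimum), while the one returned has at least $\tfrac13 k^\ast$ red edges, so the algorithm is a $3$-approximation. It performs one matching computation and at most $k+1 \le \tfrac n2 + 1$ invocations of $\mathcal{A}$, hence runs in polynomial time; and when $(G,k)$ is itself a YES-instance of EM we have $k^\ast = k$, recovering the sharper bound $\tfrac13 k \le k' \le k$ quoted in the introduction. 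I do not expect a genuine obstacle here: the only two things to be careful about are that $k^\ast$ must be guessed (by enumeration over $\{0,\dots,k\}$) rather than computed exactly, and that $\mathcal{A}$ is invoked strictly as a black box whose correctness is needed only on YES-instances, with every one of its outputs verified before it is allowed to influence the final answer.
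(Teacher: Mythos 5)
Your proposal is correct and follows essentially the same reduction as the paper: both compute a minimum-red perfect matching to detect infeasibility, then invoke the putative algorithm $\mathcal{A}$ for every candidate target value of $k^*$ (the paper enumerates $\{|R(M_\text{min})|,\dots,k\}$, you enumerate $\{0,\dots,k\}$, which is immaterial), filter the outputs for feasibility, and return the best survivor, with correctness resting on the single iteration where the target equals $k^*$. The only difference is cosmetic: you verify each output explicitly rather than assume $\mathcal{A}$ always emits some perfect matching, and your infeasibility test ($k_{\min}>k$) matches the definition slightly more precisely than the paper's two-sided check.
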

\begin{proof}
    Fix an instance $\langle G, k \rangle$ of \EMO.
    If the instance is feasible, let $k^* \leq k$ be the number of red edges in an optimal solution of $\textsc{EM-opt}$.
    Suppose $\mathcal{A}$ is an algorithm for $\textsc{EM-opt}$, which given an instance $\langle G,k \rangle$ returns a 3-approximation solution to {\EMO} if $k = k^*$, and returns some arbitrary perfect matching otherwise.
    We devise an algorithm $\mathcal{A}'$, which is a 3-approximation of $\textsc{EM-opt}$ for the input $\langle G, k \rangle$. First, $\mathcal{A}'$ computes in polynomial time the perfect matchings $M_\text{min}$ and $M_\text{max}$ with the minimum and maximum number of red edges. If it is not the case that $|R(M_\text{min})| \leq k \leq |R(M_\text{max})|$, then $\mathcal{A}'$ returns ``infeasible''. Afterwards, $\mathcal{A}'$ calls $\mathcal{A}$ as a subroutine for all $k' \in \{|R(M_\text{min})|,\dots, k\}$ and receives a  perfect matching $M_{k'}$ each time. Finally, $\mathcal{A}'$ returns the best among all the matchings $M_{k'}$ that are feasible for $\textsc{EM-opt}$. It is easy to see that $\mathcal{A}'$ is a correct 3-approximation, since there will be an iteration with $k' = k^*$. If $\mathcal{A}$ has running time $O(f)$, then $\mathcal{A}'$ has running time $O(nf + f_\text{Mat})$, where $f_\text{Mat}$ denotes the time to deterministically compute a maximum weight perfect matching.
\end{proof}

\begin{claim}\label{claim:existMe}
    We can assume without loss of generality that for every edge $e \in E$ there exists a perfect matching of $G$ containing $e$.
\end{claim}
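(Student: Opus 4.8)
The plan is to discard every edge that does not appear in any perfect matching of $G$, and to argue that this deletion changes nothing relevant to \textsc{EM-opt}. So suppose $\mathcal{A}$ is a deterministic polynomial-time 3-approximation for \textsc{EM-opt} that is guaranteed to work correctly whenever every edge of the input graph lies in some perfect matching. I will build from it an algorithm $\mathcal{A}'$ with no such restriction. By \cref{claim:feasibility} we may assume the input $\langle G, k\rangle$ is a YES-instance of EM, so $G$ has a perfect matching; compute one, call it $M$, in deterministic polynomial time via a bipartite matching algorithm.

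Next I would identify the edges lying in no perfect matching of $G$. Orient the edges of $G$ as in the definition of $G_M$ (ignoring weights) and compute the strongly connected components of this digraph, e.g.\ with Tarjan's algorithm. Every edge $e\in M$ lies in the perfect matching $M$. An edge $e=uv\notin M$ lies in some perfect matching of $G$ if and only if it lies on an $M$-alternating cycle: if $e\in M'$ for a perfect matching $M'$, then $e\in M\Delta M'$, which is a vertex-disjoint union of $M$-alternating cycles, so one of them passes through $e$; conversely, if $C$ is an $M$-alternating cycle through $e$, then $M\Delta C$ is a perfect matching containing $e$. By \cref{obs:directed_M_alternating} this happens exactly when $e$ lies on a directed cycle of $G_M$, i.e.\ when $u$ and $v$ lie in the same strongly connected component. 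Hence in polynomial time we obtain the set $F\subseteq E$ of edges contained in no perfect matching of $G$; set $G' := (V, E\setminus F)$.

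By construction $G$ and $G'$ have exactly the same collection of perfect matchings: deleting edges cannot create a perfect matching (a perfect matching of $G'$ is a subset of $E$ satisfying the matching condition, hence a perfect matching of $G$), and every perfect matching of $G$ avoids all of $F$ by definition of $F$, hence is a perfect matching of $G'$. Consequently $\langle G', k\rangle$ and $\langle G, k\rangle$ have the same feasible solutions and the same objective values; in particular $\langle G', k\rangle$ is still a YES-instance of EM, and every edge of $G'$ lies in some perfect matching of $G'$ (it lay in one of $G$, and that matching survives). So $\mathcal{A}'$ may invoke $\mathcal{A}$ on $\langle G', k\rangle$ and return its output, which is a perfect matching of $G'$ — hence of $G$ — realizing a 3-approximation of \textsc{EM-opt} on $\langle G, k\rangle$. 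The added cost is one perfect-matching computation and one linear-time SCC computation, so $\mathcal{A}'$ is deterministic polynomial time.

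There is essentially no hard part here; the only point deserving a word of care is that all edges of $F$ may be removed simultaneously rather than one at a time, which is immediate once one notes that each individual deletion leaves the set of perfect matchings (and thus the status of the remaining edges) unchanged.
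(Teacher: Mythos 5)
Your proposal is correct and takes essentially the same approach as the paper: preprocess by deleting every edge that lies in no perfect matching, observing that this does not change the set of perfect matchings of $G$. The paper dispatches this in one line, checking each edge individually for irrelevance by re-running a perfect-matching computation (time $O(|E| \cdot f'_\text{Mat})$), whereas you identify all irrelevant edges at once via a single perfect matching plus a strongly-connected-components computation on $G_M$, which is a neat and asymptotically faster implementation of the same idea; you also spell out the standard alternating-cycle argument that the paper treats as immediate. The core reduction is identical.
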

\begin{proof}
If an edge is not contained in any perfect matching, it is irrelevant and it can be deleted. We can therefore pre-process the graph and check for each edge in time $O(f'_\text{Mat})$ if it is irrelevant. Here $f'_\text{Mat}$ denotes the time to deterministically check if a graph has a perfect matching, e.g.\ by running a maximum weight perfect matching algorithm.
\end{proof}

\section{A 3-approximation of Exact Matching}\label{sec:proof_thm}
In this section we are proving our main result, \cref{thm:3approx}. Let $G=(A \sqcup B, E)$ be a red-blue edge-colored bipartite graph and $k$ be an integer, such that they together form an instance of {\EMO}. We work under the assumptions of  \cref{sec:assumptions}.
In \cref{algo:3approx} we show an algorithm that given $G$ and $k$ outputs a perfect matching containing between $\frac{1}{3}k$ and $k$ red edges. Let us discuss the general ideas before formally analysing the algorithm.

\begin{algorithm}
\KwIn{A bipartite graph $G = (A \sqcup B, E)$ and an integer $k \geq 0$, such that they fulfill the assumptions from \cref{sec:assumptions}.}
\KwOut{$M$, a perfect matching such that $\frac{1}{3}k \leq |R(M)| \leq k$.}
Compute a perfect matching $M$ of minimal number of red edges. \\
If $|R(M)| \geq \frac{1}{3}k$ then output $M$. \\
Otherwise compute a cycle $C$ in $G_M$ with $w_M(C) > 0$ and $|\Ep(C)| \leq \frac{2}{3}k$, or determine that no such cycle exists. \\
If such a cycle $C$ exists, set $M \gets M \Delta C$ and go to \textit{Step 2}.\\
Otherwise there are no positive cycles with $|\Ep(C)| \leq \frac{2}{3}k$ in $G_M$. \For{\upshape every red edge $e \in R(G)$} {
Compute the perfect matching $M^e$ containing $e$ of minimal number of red edges. \\
If $\frac{1}{3}k \leq |R(M^e)| \leq k$ then output $M^e$.}
Output $\bot$
\caption{The 3-approximation algorithm for {\EMO} of \cref{thm:3approx}}
\label{algo:3approx}
\end{algorithm}

We reuse the idea from \cite{Maalouly23_stacs} to iteratively improve a perfect matching $M$ by a small amount until it has the right amount of red edges. This improvement is done by finding positive cycles in $G_M$ with no more than $\frac{2}{3}k$ positive edges (using the algorithm of \cref{prop:recycling}). As opposed to the algorithm in \cite{Maalouly23_stacs}, we do not want to ``overshoot'' the number of red edges, i.e.\ the obtained perfect matching cannot have more than $k$ red edges. We therefore constrain the cycle to have $|\Ep(C)| \leq \frac{2}{3}k$. This implies $|R(M \Delta C)| \leq k$. However the issue with that approach is that unlike in \cite{Maalouly23_stacs}, we cannot guarantee that such a cycle always exists. So what do we do if we are not able find a good cycle? The answer to this question turns out to be the key insight behind our algorithm. For every edge $e$ we define
\[ M^e \in \argmin\{|R(M)| : M \text{ is a perfect matching with } e\in M\} \]
to be a perfect matching of $G$ containing $e$ with minimal number of red edges. Note that by \cref{claim:existMe}, $M^e$ is properly defined. We also note that $M^e$ can be computed in polynomial time.

We claim that in the case that no good cycle is found, the set of matchings $M^e$ ``magically'' fixes our problems. Specifically, we claim that at least one of the matchings $M^e$ is a sufficient approximation. The rough intuition behind this is the following. The fact that no good cycle exists means that the graph does not contain small substructures, which can be used to modify the current solution $M$ towards a better approximation. In a sense, the graph is rigid. This rigidity means that maybe there could exist some edge $e$ in the optimal solution $M^*$, such that every perfect matching including $e$ is already quite similar to $M^*$. Our approximation algorithm simply tries to guess this edge $e$. After proving \cref{lem:correctness}, the remaining part of the paper is devoted to quantify and prove this structural statement.

\begin{lemma}\label{lem:correctness}
    \cref{algo:3approx} runs in polynomial time and either outputs $\bot$ or a correct 3-approximation of \EMO.
\end{lemma}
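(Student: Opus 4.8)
The plan is to separate the two claims of \cref{lem:correctness}: (i) the algorithm terminates in polynomial time, and (ii) whenever it outputs a matching $M$ (as opposed to $\bot$), that matching satisfies $\frac13 k \leq |R(M)| \leq k$, hence is a valid 3-approximation of \textsc{EM-opt} (recalling from \cref{claim:feasibility} that we may assume $k = k^*$, so a perfect matching with exactly $k$ red edges exists and the optimum is $k$).

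For the running time, the only nontrivial point is that the loop between \textit{Step 2} and \textit{Step 4} terminates after polynomially many iterations. Here I would use \cref{obs:rMDC}: each time we replace $M$ by $M \Delta C$ for a cycle with $w_M(C) > 0$, the quantity $|R(M)|$ strictly increases by $w_M(C) \geq 1$. Since $|R(M)| \leq k \leq n$ is always maintained (see below), there are at most $n$ iterations. Each iteration costs a polynomial-time call to the algorithm of \cref{prop:recycling} (applied to $G_M$ with budget $t = \lfloor \frac23 k\rfloor$), plus the initial maximum-weight perfect matching computation, plus at most $|R(G)| \leq |E|$ computations of the matchings $M^e$, each of which is a maximum-weight perfect matching computation on $G$ with the edge $e$ forced in. All of these are polynomial, so the total is polynomial.

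For correctness, I would track the invariant $|R(M)| \leq k$ throughout. Initially $M$ has the minimum number of red edges, and since a perfect matching with exactly $k$ red edges exists, $|R(M)| \leq k$. When we update $M \gets M \Delta C$ with $w_M(C) > 0$ and $|\Ep(C)| \leq \frac23 k$: by \cref{obs:rMDC}, $|R(M \Delta C)| = |R(M)| + w_M(C) \leq |R(M)| + |\Ep(C)|$; but I also need the sharper bound. Actually the clean argument is: if before the update $|R(M)| < \frac13 k$ (which is the only case we perform an update, by \textit{Step 2}), then $|R(M \Delta C)| = |R(M)| + w_M(C) \leq |R(M)| + |\Ep(C)| < \frac13 k + \frac23 k = k$. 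So the invariant $|R(M)| \leq k$ is preserved, and moreover the loop only ever outputs $M$ in \textit{Step 2} when $|R(M)| \geq \frac13 k$ already holds, giving $\frac13 k \leq |R(M)| \leq k$ as required. For the matchings $M^e$: each $M^e$ is a perfect matching, and \textit{Step 7} only outputs $M^e$ when the explicit test $\frac13 k \leq |R(M^e)| \leq k$ passes, so any such output is trivially a valid 3-approximation. Hence in every case where the algorithm outputs a matching, it is a correct 3-approximation; otherwise it outputs $\bot$.

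The one thing this lemma deliberately does \emph{not} establish is that the $\bot$ branch is never taken on a YES-instance of EM — that is precisely the structural heart of the paper, proved later. So the main ``obstacle'' here is not really an obstacle: \cref{lem:correctness} is a routine bookkeeping statement, and the only mild subtlety is getting the arithmetic $|R(M)| < \frac13 k$ together with $|\Ep(C)| \leq \frac23 k$ to combine into $|R(M \Delta C)| \leq k$ with the correct (non-strict) inequality, and being careful that the test in \textit{Step 2} is checked \emph{before} each potential update so that the invariant is maintained at loop entry. I would also remark explicitly that $|\Ep(C)| \leq \lfloor\frac23 k\rfloor$ combined with the integrality of all quantities still yields $|R(M)| + w_M(C) \leq k$, so no rounding issue arises.
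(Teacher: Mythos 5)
Your proposal is correct and follows essentially the same approach as the paper's proof: bound the number of iterations of the improvement loop by the monotone increase of $|R(M)|$, invoke \cref{prop:recycling} for each iteration's cost, and observe that the \textit{Step 2} test plus the budget constraint $|\Ep(C)| \leq \frac{2}{3}k$ together pin $|R(M)|$ into $[\frac13 k, k]$ at any output. The paper compresses the invariant argument into a single chained inequality, whereas you spell out the loop invariant explicitly, which is slightly clearer but not a different argument.
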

\begin{proof}
    Note that the first inner loop is executed at most $O(|V|)$ times, since $w_M(C) > 0$ in each iteration, thus $|R(M)|$ is monotonously increasing. Furthermore, every iteration of the loop is executed in polynomial time due to \cref{prop:recycling}. The second loop is executed at most $O(|E|)$ times. The matching in \textit{Step 1}, as well as the matchings $M^e$ can be computed in polynomial time using a minimum weight perfect matching algorithm.
    In total, this takes polynomial time. If the algorithm does not output $\bot$, then it outputs either in \textit{Step 7} (which is obviously a 3-approximation), or in \textit{Step 2}, which is a 3-approximation because $k/3 \leq |R(M \Delta C)| = |R(M)| + w_M(C) \leq |R(M)| + |\Ep(C)| \leq k$.
\end{proof}

All it remains to prove is that \cref{algo:3approx} never outputs $\bot$. This happens if the conditions in \textit{Step 2} and in \textit{Step 7} are not satisfied. In that regard, define the following \emph{critical tuple}.

\begin{definition*}[Critical tuple]
    Let $\tup$ be such that $G$ is a bipartite graph whose edges are colored red or blue, $k \geq 0$ is an integer and $M^*$ and $M$ are perfect matchings of $G$.
    The tuple $\tup$ is said to be \emph{critical} if:
    \begin{itemize}
        \item All the assumptions in \cref{sec:assumptions} hold.
        \item $M^*$ is a perfect matching of $G$ with exactly $k$ red edges.
        \item $|R(M)| < \frac{1}{3}k$.
        \item If $C$ is a directed cycle in $G_M$ such that $w_M(C) > 0$ then $|\Ep(C)| > \frac{2}{3}k$.
        \item For every edge ${e \in R(M^* \setminus M)}$ we have $|R(M^e)| < \frac{1}{3}k$.
    \end{itemize}
\end{definition*}


Note that if \cref{algo:3approx} outputs $\bot$, we must have a critcal tuple. Indeed, for every edge $e \in R(M^* \setminus M)$, by minimality of $M^e$, we have $|R(M^e)| \leq k$. So if the algorithm returns $\bot$, then it must be the case that $|R(M^e)| < k/3$ for all $e \in R(G)$ so in particular also for all $e \in R(M^* \setminus M)$. We will prove the following: critical tuples do not exist. This means that \cref{algo:3approx} never outputs $\bot$ and is therefore a correct 3-approximation algorithm.

\begin{lemma}\label{lem:good_Me}
    A tuple $\tup$ can never be critical.
\end{lemma}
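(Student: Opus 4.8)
The plan is to argue by contradiction: suppose $\tup$ is a critical tuple and derive a contradiction. The starting point is the decomposition $M \Delta M^* $ into vertex-disjoint $M$-alternating (hence, by \cref{obs:directed_M_alternating}, directed in $G_M$) cycles $C_1,\dots,C_r$. Since $|R(M^*)| = k$ and $|R(M)| < \frac13 k$, \cref{obs:rMDC} applied iteratively gives $\sum_i w_M(C_i) = |R(M^*)| - |R(M)| > \frac23 k$, so the ``total positive weight'' available in these cycles is large. On the other hand, the criticality condition says that \emph{every} positive directed cycle $C$ in $G_M$ has $|\Ep(C)| > \frac23 k$; in particular each positive $C_i$ among the $C_i$'s is ``expensive'' in positive edges. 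I would first handle the easy bookkeeping: each $C_i$ has $\Ep(C_i) \subseteq R(M^* \setminus M)$ (positive edges are red edges outside $M$, and edges of $C_i$ not in $M$ lie in $M^*$), so $\sum_i |\Ep(C_i)| \le |R(M^*)| = k$ since the $C_i$ are edge-disjoint. Combined with the fact that positive cycles need more than $\frac23 k$ positive edges, at most one of the $C_i$ can be positive, and the rest are non-positive; call the positive one $C$ (it must exist because the total weight is positive).

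Next I would use the last criticality condition, about the matchings $M^e$. Pick any edge $e \in \Ep(C) \subseteq R(M^* \setminus M)$. Criticality says $|R(M^e)| < \frac13 k$. Now compare $M^e$ with $M^*$: the symmetric difference $M^e \Delta M^*$ is again a disjoint union of directed cycles in $G_{M^*}$, or better, I would work in $G_M$ and compare $M^e$ with $M$. The key tension is this: $e$ is a red edge forced into $M^e$; but $M^e$ has very few red edges, fewer even than the "budget" $\frac13 k$. Meanwhile $M$ also has few red edges but does \emph{not} contain $e$. The difference $M \Delta M^e$ is a union of directed cycles in $G_M$, one of which — call it $D$ — contains $e$. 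Since $e \notin M$ but $e \in M^e$, and we want to go from $M$ to $M^e$, the edge $e$ appears in $D$ as a positive edge if we flip toward $M^e$... actually I need to be careful about orientation: $M^e \Delta M$ gives cycles that are $M$-alternating; the cycle $D \ni e$ satisfies $w_M(D) = $ (change in red edges along $D$), and since $|R(M^e)| < \frac13 k$ while we might hope $|R(M)|$ is smaller still, $D$ could be non-positive. The real leverage must come from combining $D$ with $C$: both pass through $e$ (well, $C$ contains $e$ and $D$ contains $e$), so by \cref{obs:intersectingC} they share the edge of $M$ adjacent to each endpoint of $e$... wait, $e \notin M$, so I should think about which matching edges they share. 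This is where the geometric intuition about cycles in a rigid graph (mentioned in the introduction) presumably enters.

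I expect the main obstacle to be precisely this gluing argument: showing that from the existence of the expensive positive cycle $C$ (with $|\Ep(C)| > \frac23 k$) together with a matching $M^e$ having too few red edges, one can surgically construct a \emph{new} positive directed cycle $C'$ with $|\Ep(C')| \le \frac23 k$, contradicting criticality — or alternatively directly construct a matching $M^e$ with $|R(M^e)| \ge \frac13 k$, contradicting the last condition. The construction likely proceeds by taking a well-chosen sub-path $C[P_1,P_2]$ of the big cycle $C$ and rerouting it through part of $D$ or through $M^e$-edges, using \cref{obs:intersectingC} to guarantee the pieces can be spliced at shared vertices/edges into a closed walk, then extracting a genuine cycle from that walk while controlling the number of positive edges — roughly, a cycle of positive weight but using only a third (rather than two-thirds or more) of the red budget. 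The factor $\frac13$ versus $\frac23$ suggests a pigeonhole/averaging step: the big cycle $C$ can be cut into pieces, one of which carries enough positive weight on few enough positive edges, as soon as there is "room" provided by $M^e$ or by the non-positive $C_i$'s. I would set up the accounting so that assuming all these cuts fail forces either $|\Ep(C)| \le \frac23 k$ or some $|R(M^e)| \ge \frac13 k$, both contradictions, completing the proof that critical tuples cannot exist and hence that \cref{algo:3approx} never outputs $\bot$.
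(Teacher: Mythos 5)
Your setup is correct and matches the paper's first steps: you correctly argue that exactly one cycle $\Cp$ in $M\Delta M^*$ is positive (since each positive cycle needs $>\frac23 k$ positive edges and $M\Delta M^*$ has at most $k$), and for each $e\in\Ep(\Cp)$ you correctly identify the cycle $C_e$ in $M\Delta M^e$ through $e$, observing that $w_M(C_e)\le 0$ because $|R(M^e)|<\frac13 k$ forces $|\Ep(C_e)|<\frac13 k$. But after that you explicitly defer the argument (``this is where the geometric intuition... presumably enters'', ``I expect the main obstacle to be precisely this gluing argument''), and the sketch you give of how the contradiction should arise does not match what the paper actually does, nor does it amount to a proof on its own.

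The genuine gap is that the contradiction in the paper is a global double-counting argument, not a local re-routing that produces a single contradicting cycle or a single contradicting $M^e$. The paper defines a \emph{target set}: a set $\mathcal C$ of non-positive cycles such that (i) every $e\in\Ep(\Cp)$ lies in some cycle, (ii) each $C\in\mathcal C$ meets $\Cp$ in a single path, and (iii) each negative edge lies in at most two cycles. Once such a set exists, $\sum_{C\in\mathcal C} w_M(C) \ge |\Ep(\Cp)| - 2|\En(G_M)| > \frac23 k - \frac23 k = 0$, contradicting non-positivity. The bulk of the work is then to manufacture $\mathcal C$ from the initial family $\{C_e\}$: first one must replace each $C_e$ by a single \emph{backward jump} whose short cycle $C_{Q_b}$ covers $e$ and meets $\Cp$ in one path (this is Lemma~\ref{lem:exists_Qb}, proved by a shadow/interpolation argument over a sequence of cycles $C^0_{odd},\dots,\Cp_{odd}$ obtained by successively replacing arcs of $\Cp$ by odd-indexed forward jumps), then prune to get multiplicity $\le 2$ on $\Cp$ (Lemma~\ref{lem:get_cond3}), and finally fuse any three cycles sharing a negative edge off $\Cp$ into two non-positive ones (Lemma~\ref{lem:cycle_fusion}). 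None of this machinery --- target sets, forward vs.\ backward jumps, reach, shadows, the interpolation, or the three-into-two fusion --- appears in your proposal, and your alternative guess (``directly construct a positive cycle $C'$ with $|\Ep(C')|\le\frac23 k$'' by cutting $\Cp$ into pieces) does not obviously close the argument: pieces of $\Cp$ are paths, not cycles, and turning them into cycles requires exactly the jump structure you did not develop. So the proposal correctly lays out the first page of the proof but leaves the hard three sections unaddressed.
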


\subsection{Overview of the main proof}\label{sec:proof_lem_Me}
In this section, we explain the idea behind the proof of our main lemma, \cref{lem:good_Me}. From this point onward, consider a fixed tuple $\tup$. We prove \cref{lem:good_Me} by contradiction and therefore assume that $\tup$ is a critical tuple. The main strategy of the proof is to find a set of cycles with contradictory properties. We call such a set of cycles a \emph{target set}. We start with a key observation stating that there only exist two different types of cycles in a critical tuple. These two types can be distinguished by examining $|\Ep(C)|$ of a cycle $C$. We will repeatedly make use of this simple observation.

\begin{observation}\label{obs:2types_cycles}
Let $\tup$ be a critical tuple. Then the following properties hold.
    \begin{enumerate}[ref=\ref{obs:2types_cycles}.\arabic*]
        \item $|\En(G_M)| < \frac{1}{3}k$. \label[obs]{pty:negG}
        \item If $C$ is a directed cycle in $G_M$ then $w_M(C) >0$ if and only if $|\Ep(C)| > \frac{2}{3}k$. \label[obs]{pty:posC}
        \item If $C$ is a directed cycle in $G_M$ then $w_M(C) \leq 0$ if and only if $|\Ep(C)| < \frac{1}{3}k$. \label[obs]{pty:negC}
    \end{enumerate}
\end{observation}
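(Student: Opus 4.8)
The plan is to extract all three items from one elementary observation and then do a short case analysis. Note first that in $G_M$ the \emph{negative} edges are by definition exactly the red edges lying in $M$, so $\En(G_M) = R(M)$, hence $|\En(G_M)| = |R(M)| < \frac{1}{3}k$ by the definition of a critical tuple; this is precisely \cref{pty:negG}. Moreover, any directed cycle $C$ in $G_M$ is a subgraph of $G_M$, so $\En(C) \subseteq \En(G_M)$ and therefore $|\En(C)| < \frac{1}{3}k$. This single bound on how many negative edges a cycle can carry is the only real ingredient needed for the remaining two items.

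For \cref{pty:posC}, the forward implication ``$w_M(C) > 0 \Rightarrow |\Ep(C)| > \frac{2}{3}k$'' is literally one of the defining conditions of a critical tuple, so there is nothing to prove. For the converse, assume $|\Ep(C)| > \frac{2}{3}k$; then, using $w_M(C) = |\Ep(C)| - |\En(C)|$ together with $|\En(C)| < \frac{1}{3}k$, we get $w_M(C) > \frac{2}{3}k - \frac{1}{3}k = \frac{1}{3}k \geq 0$ (here we use $k \geq 0$), so $w_M(C) > 0$.

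For \cref{pty:negC}, I would argue by complementation, using that for a directed cycle exactly one of $w_M(C) > 0$ and $w_M(C) \leq 0$ holds. The forward implication is immediate: if $w_M(C) \leq 0$ then $|\Ep(C)| \leq |\En(C)| < \frac{1}{3}k$. For the converse, suppose $|\Ep(C)| < \frac{1}{3}k$; if we had $w_M(C) > 0$, then \cref{pty:posC} (just established) would give $|\Ep(C)| > \frac{2}{3}k \geq \frac{1}{3}k > |\Ep(C)|$, a contradiction; hence $w_M(C) \leq 0$.

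I do not expect any genuine obstacle here: the statement is essentially an unfolding of the definitions plus the inequality $|\En(C)| < \frac{1}{3}k$. The only points needing a little care are the degenerate case $k = 0$ (where some hypotheses become vacuous since edge counts are nonnegative) and using the dichotomy $w_M(C) > 0$ versus $w_M(C) \leq 0$ correctly when transferring between \cref{pty:posC} and \cref{pty:negC}.
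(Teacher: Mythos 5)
Your proof is correct and follows essentially the same route as the paper: bound the negative edges via $|R(M)| < \frac{1}{3}k$, read the forward direction of Property~2 off the definition of a critical tuple, and derive the remaining implications by arithmetic on $w_M(C) = |\Ep(C)| - |\En(C)|$. You are slightly more explicit than the paper (which leaves the converse directions of Properties~2 and~3 implicit), but there is no genuine difference in approach.
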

\begin{proof}
    $\tup$ is a critical tuple so in particular $|R(M)| < \frac{1}{3}k$, thus the graph $G_M$ contains at most $\frac{1}{3}k$ negative edges. So \cref{pty:negG} holds.
    \cref{pty:posC} follows directly from the definition of a critical tuple and \cref{pty:negG}.
    Finally if $w_M(C) \leq 0$ then $|\Ep(C)| \leq |\En(C)|$. By \cref{pty:negG}, $|\En(C)| < \frac{1}{3}k$, thus $|\Ep(C)| < \frac{1}{3}k$.
\end{proof}

We introduce a special cycle, which we call the cycle $\Cp$.
\begin{definition*}[Cycle $\Cp$]
    $\Cp$ is the positive cycle in $G_M$ such that $\Cp \subset M \Delta M^*$.
\end{definition*}
We claim that $\Cp$ is well-defined and unique. Indeed, since $|R(M)| < |R(M^*)|$, the set of cycles $M \Delta M^*$ has to contain at least one positive cycle. Furthermore, there are at most $k = |R(M^*)|$ positive edges in $M \Delta M^*$ and by \cref{obs:2types_cycles}, every positive cycle contains at least $\frac 2 3 k$ of them. Hence there is no second positive cycle in $M \Delta M^*$.

\begin{observation}\label{claim:Cp_bounds}
    If $\tup$ is a critical tuple, then $\frac{2}{3}k < |\Ep(\Cp)| \leq k$.
\end{observation}
\begin{proof}
    The lower bound directly follows from \cref{pty:posC}.
    Positive edges in $M \Delta M^*$ are red edges in $M^*$, hence $M \Delta M^*$ contains at most $k = |R(M^*)|$ positive edges. Thus in particular $|\Ep(C)| \leq k$.
\end{proof}

\begin{definition*}[Target set]
    We call a set $\mathcal{C}$ of non-positive cycles in $G_M$ a \emph{target set} if:
    \begin{enumerate}
        \item $\forall e \in \Ep(\Cp)$ there exists a cycle in $\mathcal{C}$ containing $e$. (\cref{cond:1cover_ep})\label[cond]{cond:1cover_ep}
        \item $\forall C \in \mathcal{C},\, C \cap \Cp$ is a single path. (\cref{cond:4single_path})\label[cond]{cond:4single_path}
        \item $\forall e \in \En(G_M)$ there are at most two cycles in $\mathcal{C}$ containing $e$. (\cref{cond:34cover_en})\label[cond]{cond:34cover_en}
    \end{enumerate}
\end{definition*}
The following \cref{lem:contradiction} shows that in order to prove our main lemma, \cref{lem:good_Me}, it suffices to find a target set.
\begin{lemma}\label{lem:contradiction}
    Let $\tup$ be a critical tuple and $\mathcal{C}$ be a target set in $G_M$. Then a contradiction arises.
\end{lemma}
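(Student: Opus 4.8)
The plan is to derive a contradiction by a double-counting / averaging argument on the positive edges of $\Cp$. Let me think about what the three conditions of a target set give us. Condition 1 says every positive edge of $\Cp$ is covered by some cycle in $\mathcal C$; since $|\Ep(\Cp)| > \frac 2 3 k$ by \cref{claim:Cp_bounds}, there are many positive edges to cover. Condition 3 says every negative edge of $G_M$ is used by at most two cycles of $\mathcal C$, and by \cref{pty:negG} there are fewer than $\frac 1 3 k$ negative edges total, so $\sum_{C \in \mathcal C} |\En(C)| < \frac 2 3 k$. On the other hand, each $C \in \mathcal C$ is non-positive, so by \cref{pty:negC} we have $|\Ep(C)| < \frac 1 3 k$, hence $|\En(C)| \geq |\Ep(C)|$... but that alone isn't obviously enough. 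The key must be Condition 2: $C \cap \Cp$ is a single path, so the positive edges of $\Cp$ that $C$ covers form a contiguous sub-path of $\Cp$, and all of those are positive edges of $C$ as well (they're red non-matching edges), so $|\Ep(C)| \geq |\Ep(C) \cap \Ep(\Cp)|$, which is the number of positive edges of $\Cp$ covered by $C$.

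The hard part — and I expect this to be the main obstacle — is turning "single path" plus "non-positive" into a genuine contradiction rather than just a loose inequality. Here is the idea I would pursue. For each $C \in \mathcal C$, let $a_C := |\Ep(C) \cap \Ep(\Cp)|$ (the number of $\Cp$-positive-edges it covers) and let $b_C := |\En(C)|$. Since $C$ is non-positive, $|\Ep(C)| \leq b_C$, and since the $a_C$ edges are positive edges of $C$, we get $a_C \leq |\Ep(C)| \leq b_C$. Now sum over $\mathcal C$: by Condition 1, $\sum_C a_C \geq |\Ep(\Cp)| > \frac 2 3 k$ (each positive edge of $\Cp$ is counted at least once); by Condition 3, $\sum_C b_C = \sum_{e \in \En(G_M)} |\{C : e \in C\}| \leq 2 |\En(G_M)| < 2 \cdot \frac 1 3 k = \frac 2 3 k$. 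Chaining, $\frac 2 3 k < \sum_C a_C \leq \sum_C b_C < \frac 2 3 k$, a contradiction.

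So the proof is actually a short chain of inequalities once the right quantities are identified; the role of Condition 2 ("single path") is to guarantee that all the $\Cp$-edges covered by $C$ are genuinely positive edges \emph{of $C$} (a sub-path of $\Cp$ inherits its edge weights), so that $a_C \leq |\Ep(C)|$ — without contiguity one might worry about orientation mismatches, but since both cycles are directed and $M$-alternating and share a path, the shared edges have the same status (in/out of $M$, red/blue) in both. I would write the argument as: (i) fix a critical tuple and target set $\mathcal C$; (ii) for each $C\in\mathcal C$ observe $C$ non-positive $\Rightarrow |\Ep(C)| \le |\En(C)|$; (iii) observe the shared sub-path contributes positive edges to $C$, so $|\Ep(C)\cap\Ep(\Cp)| \le |\Ep(C)|$; (iv) sum, using Condition 1 for the lower bound $\sum_C|\Ep(C)\cap\Ep(\Cp)| \ge |\Ep(\Cp)| > \frac23 k$ and Condition 3 together with \cref{pty:negG} for the upper bound $\sum_C |\En(C)| \le 2|\En(G_M)| < \frac23 k$; (v) conclude $\frac 2 3 k < \frac 2 3 k$, contradiction. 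The only subtlety to get exactly right is whether Condition 1 gives $\sum_C a_C \geq |\Ep(\Cp)|$ as stated — it does, since each $e\in\Ep(\Cp)$ lies in at least one $C\in\mathcal C$ and then contributes $1$ to that $C$'s $a_C$ — and making sure the strict inequalities line up (they do, since $|\Ep(\Cp)| > \frac 2 3 k$ is strict and $|\En(G_M)| < \frac 1 3 k$ is strict).
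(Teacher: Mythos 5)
Your proof is correct and takes essentially the same route as the paper: lower-bound $\sum_{C\in\mathcal C}|\Ep(C)|$ by $|\Ep(\Cp)|$ via Condition 1, upper-bound $\sum_{C\in\mathcal C}|\En(C)|$ by $2|\En(G_M)|$ via Condition 3, and combine with non-positivity of each $C$ and the numeric bounds from \cref{claim:Cp_bounds} and \cref{pty:negG} to reach $\frac23 k < \frac23 k$ (the paper phrases this as $\sum_C w_M(C)$ being simultaneously $>0$ and $\le 0$). One small correction to your commentary: Condition 2 is not actually used anywhere in this argument, and the paper states this explicitly — the inequality $|\Ep(C)\cap\Ep(\Cp)|\le|\Ep(C)|$ is just set inclusion, since $w_M$ is a function of the edge and $M$ alone, so an edge that is positive in $\Cp$ is positive in any cycle containing it regardless of whether $C\cap\Cp$ is a single path.
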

\begin{proof}
    We can lower bound the sum of the weights of cycles in $\mathcal{C}$.
    \begin{align*}
    \sum_{C \in \mathcal{C}} w_M(C) &\geq 1 \cdot |\Ep(C^+)| - 2 \cdot |\En(G_M)| \\
    &> \frac{2}{3}k - 2 \cdot \frac{1}{3}k \qquad \qquad \qquad \qquad \text{(by \cref{claim:Cp_bounds,pty:negG})}\\
    &= 0
    \end{align*}
    However $\sum_{C \in \mathcal{C}} w_M(C) \leq 0$ since all cycles $C \in \mathcal{C}$ have non-positive weight $w_M(C) \leq 0$, so this is a contradiction.
\end{proof}
Note that \cref{cond:4single_path} is not needed in the proof of \cref{lem:contradiction}. However \cref{cond:4single_path} will be useful to achieve \cref{cond:34cover_en}.

The rest of the paper is structured as follows: First, we construct a set $\mathcal{C}$ of non-positive cycles satisfying only \cref{cond:1cover_ep}. This is explained in \cref{subsec:initial-set}. After that, we explain in \cref{subsec:make-simple} how to modify the initial set to additionally satisfy \cref{cond:4single_path}. In \cref{subsec:reduce-cover} we show how to modify this set again to also satisfy \cref{cond:34cover_en}. Finally, the proof of \cref{lem:good_Me} is summarized in \cref{subsec:summarize-proof-at-the-end}. To help visualize the reasonings on $G_M$  we provide various illustrations. We focus only on the orientation of paths and cycles and not on actual vertices and edges of $G_M$ so we omit them and draw paths and cycles using simple lines whose orientation is indicated by an arrow. Individual edges and vertices will be indicated in red.

\subsection{Obtaining an initial set of cycles satisfying Condition {\ref{cond:1cover_ep}}}
\label{subsec:initial-set}
We explain how to obtain an initial set of non-positive cycles in $G_M$ which satisfies only \cref{cond:1cover_ep}. To this end, for a fixed critical tuple $\tup$ consider the following definition:

\begin{definition*}[Cycle $C_e$]
    Let $e \in \Ep(\Cp)$.
    We define $C_e \subseteq M \Delta M^e$ as the unique directed cycle in $M \Delta M^e$ which contains the edge $e$.
\end{definition*}

Observe that $C_e$ is well-defined, since $e \in M \Delta M^e$. We claim that for all possible $e \in \Ep(\Cp)$ we have $w_M(C_e) \leq 0$. Indeed,
positive edges in $M \Delta M^e$ are red edges in $M^e$. By the definition of a critical tuple and $\Ep(\Cp) \subseteq R(M^* \setminus M)$,
-we have $|R(M^e)| < \frac{1}{3}k$, thus $|\Ep(C_e)|<\frac{1}{3}k$. Hence by \cref{pty:posC}, $C_e$ has non-positive weight in $G_M$.
As a direct consequence, the set of cycles $\{ C_e : e \in \Ep(\Cp) \}$ is a set of non-positive cycles satisfying \cref{cond:1cover_ep}.

\subsection{Modifying the set to satisfy Condition \ref{cond:4single_path}}
\label{subsec:make-simple}
We now show how to modify the initial set $\{C_e : e \in \Ep(\Cp)\}$ into a set of non-positive cycles such that both \cref{cond:1cover_ep,cond:4single_path} hold. We remark that this subsection is the most technical part of our paper. Except for the following \cref{lem:keep_atmost2}, the arguments in this subsection are independent from the rest of the paper, so the reader may choose to skip it and assume its main result, \cref{lem:get14}, is true. 
\cref{lem:keep_atmost2} shows with a simple argument that if a set of paths in $\Cp$ covers the whole cycle $\Cp$, then it is enough to keep at most two paths covering the same edge in $\Cp$. 
\begin{observation}\label{lem:keep_atmost2}
If $\mathcal{P'}$ is a set of paths in $\Cp$, then there exists a subset $\mathcal{P} \subseteq \mathcal{P'}$ such that $\bigcup_{P' \in \mathcal{P'}} P' = \bigcup_{P \in \mathcal{P}} P$ and every $e \in \Cp$ is contained in at most two paths of  $\mathcal{P}$.
\end{observation}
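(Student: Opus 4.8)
The plan is to treat $\Cp$ as a cycle (topologically a circle) and each $P' \in \mathcal{P'}$ as a circular arc, and then extract a minimal subcover. First I would discard trivial cases: if $\bigcup_{P' \in \mathcal{P'}} P' = \emptyset$ take $\mathcal{P} = \emptyset$, and if the arcs together cover all of $\Cp$, this is the genuinely interesting case; otherwise there is at least one edge of $\Cp$ not covered, so after ``cutting'' the circle at that edge the problem becomes the well-known interval-cover problem on a line, which I handle below. So assume first that $\bigcup_{P'} P'$ is a proper subset of $\Cp$.

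In the line case, I would choose $\mathcal{P} \subseteq \mathcal{P'}$ to be an \emph{inclusion-minimal} subfamily with $\bigcup_{P \in \mathcal{P}} P = \bigcup_{P' \in \mathcal{P'}} P'$ (such a $\mathcal{P}$ exists since $\mathcal{P'}$ is finite). The key step is to show that in a minimal cover by intervals on a line, no edge is covered three times. Suppose some edge $e$ of $\Cp$ lies in three distinct paths $P_1, P_2, P_3 \in \mathcal{P}$. Order the three intervals by their left endpoints (along the line obtained by cutting at an uncovered edge); say $P_1$ has the leftmost left endpoint and, among those with the same left endpoint, $P_1$ could be anything. Since all three contain the common edge $e$, whichever of the three has the rightmost right endpoint contains, on the side of $e$ towards larger coordinates, everything that the middle one does, and $P_1$ (or whichever has the leftmost left endpoint) covers everything on the other side; hence the ``middle'' interval (the one that is neither leftmost-starting nor rightmost-ending) is entirely contained in $P_1 \cup P_3$, contradicting minimality of $\mathcal{P}$. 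More carefully: among $P_1,P_2,P_3$ pick $P_a$ with minimum left endpoint and $P_b$ with maximum right endpoint; if $P_a \ne P_b$ then the third interval $P_c$ satisfies $P_c \subseteq P_a \cup P_b$ because its left endpoint is $\ge$ that of $P_a$, its right endpoint is $\le$ that of $P_b$, and $P_a, P_b$ overlap (both contain $e$) so their union is a single interval containing $P_c$; this contradicts minimality. If $P_a = P_b$ then $P_a$ contains both other intervals and again minimality is violated. Either way we get a contradiction, so every edge is in at most two paths of $\mathcal{P}$.

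For the remaining case where $\bigcup_{P'} P' = \Cp$, I would reduce to the line case by a single cut. Pick any edge $e_0 \in \Cp$ and pick one path $P_0 \in \mathcal{P'}$ containing $e_0$. Remove $P_0$ for the moment and consider the remaining arcs restricted to the line $\Cp \setminus \{\text{interior of } e_0\}$ --- but more cleanly: let $\mathcal{Q}$ be a minimal subfamily of $\mathcal{P'} \setminus \{P_0\}$ covering $\big(\bigcup_{P' \in \mathcal{P'}} P'\big) \setminus P_0$; viewing everything on the line obtained by cutting $\Cp$ at $e_0$, the argument above shows each edge of $\Cp$ other than $e_0$ lies in at most two members of $\mathcal{Q}$, and $e_0$ itself lies in none of them. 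Now set $\mathcal{P} = \mathcal{Q} \cup \{P_0\}$. Then $\bigcup_{P \in \mathcal{P}} P = \bigcup_{P' \in \mathcal{P'}} P'$, the edge $e_0$ is covered only by $P_0$, and every other edge of $\Cp$ is covered by at most two members of $\mathcal{Q}$ plus possibly $P_0$ --- which is three, so this simple version is not quite enough.

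The fix, and the step I expect to require the most care, is to be slightly smarter about the cut: instead of cutting at an arbitrary edge, I would cut at an edge that is covered by exactly one of the chosen paths, or argue by a global minimality/exchange argument directly on the cycle. Concretely, take $\mathcal{P} \subseteq \mathcal{P'}$ inclusion-minimal with $\bigcup \mathcal{P} = \bigcup \mathcal{P'}$. If $\bigcup \mathcal{P} \ne \Cp$, the line argument applies verbatim and we are done. If $\bigcup \mathcal{P} = \Cp$, then by minimality every $P \in \mathcal{P}$ has a ``private'' edge covered by no other member of $\mathcal{P}$ (else $P$ is redundant); pick such a private edge $e_P$ for some $P$, cut the circle at $e_P$, and run the line argument on $\mathcal{P} \setminus \{P\}$ restricted to $\Cp \setminus \{e_P\}$: since $\mathcal{P}$ is minimal, $\mathcal{P}\setminus\{P\}$ is a minimal cover of $\Cp \setminus \{e_P\}$ as a union of arcs on a line, so every edge $\ne e_P$ is in at most two members of $\mathcal{P} \setminus \{P\}$; adding back $P$, an edge could in principle reach three, but $P$'s contribution only matters on the arc $P$ itself, and on that arc the private edge $e_P$ shows $P$ is not redundant --- here I would invoke the minimality of $\mathcal{P}$ once more at the level of the whole cycle to rule out triple coverage, using that if edge $e$ were in $P$ and in two other members $P_1,P_2$, then since $P_1 \cup P_2$ is a sub-arc and $P$'s endpoints lie within the coverage of the rest, $P$ would be redundant. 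The clean way to phrase all of this is: \emph{in an inclusion-minimal cover of a subset of a circle by arcs, every point is covered at most twice}, proved by the same leftmost/rightmost exchange argument performed locally around any allegedly triply-covered edge (which only involves a bounded arc, so ``circular'' versus ``linear'' is immaterial). I would present the lemma in exactly that form and give the short exchange argument once.
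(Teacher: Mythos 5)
Your final paragraph is exactly the paper's proof: take an inclusion-minimal $\mathcal{P} \subseteq \mathcal{P'}$ with the same union; if some edge $e$ lies in three members $P_1,P_2,P_3$, let $P_a$ be the one extending furthest before $e$ on $\Cp$ and $P_b$ the one extending furthest after $e$; since each $P_i$ is a proper sub-arc of the cycle containing $e$, ``before'' and ``after'' along $P_i$ are well-defined, the third path $P_c$ satisfies $P_c \subseteq P_a \cup P_b$ (or $P_c, P_b \subseteq P_a$ if $P_a = P_b$), contradicting minimality. This argument is purely local around $e$, so the circle-versus-line distinction is indeed immaterial, and the paper performs it directly on $\Cp$ with no cutting whatsoever. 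Your explicit handling of the degenerate case $P_a = P_b$ is even slightly more careful than the paper's WLOG.

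The cut-based detours that precede this are, as you yourself notice, both unnecessary and not quite sound. Cutting at an uncovered edge fails when the arcs cover all of $\Cp$; the ``cut at $e_0$, remove $P_0$, re-add $P_0$'' variant admits triple coverage, which you acknowledge; and the ``private edge'' variant trails off at the crucial step (``here I would invoke the minimality\ldots''), with a suggested justification --- that $P$ would then be redundant --- that does not follow, precisely because $P$'s private edge $e_P$ prevents its removal from the cover. None of that machinery is needed once you notice the exchange argument is local; your closing sentence is the correct form of the proof, and it coincides with the paper's.
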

\begin{proof}
Let $X = \bigcup_{P' \in \mathcal{P'}} P'$. Take $\mathcal{P}$ to be a minimal subset of $\mathcal{P'}$ such that $\bigcup_{P \in \mathcal{P}} P = X$. Suppose there exists an edge $e \in C^+$ such that $\exists P_1, P_2, P_3 \in \mathcal{P}$ with $e \in P_1 \cap P_2 \cap P_3$. W.l.o.g.\ suppose $P_1$ contains an edge that is furthest before $e$ on $C^+$ among all edges of $P_1$, $P_2$ and $P_3$, and $P_2$ contains an edge furthest after $e$. Then $P_3 \subset P_1 \cup P_2$, which contradicts the minimality of $\mathcal{P}$. So every edge is contained in at most two paths from $\mathcal{P}$.
\end{proof}

For the rest of this subsection, we consider a fixed critical tuple $\tup$. To obtain our goal, \cref{cond:4single_path}, we would need every cycle $C_e$ to intersect $\Cp$ in a single path. However, in reality the interaction between $C_e$ and $\Cp$ can be a great amount more complicated. One example is depicted in \cref{fig:ex_Ce}.
In order to analyse the interaction between these two cycles, we define multiple notions around them. First,
we decompose the cycle $C_e$ into \emph{jumps} and \emph{interjumps} where jumps are the sub-paths of $C_e$ outside of $\Cp$ and interjumps are the sub-paths of $C_e$ intersecting $\Cp$ between jumps.
\begin{figure}[h]
    \centering
\begin{tikzpicture}

\draw  [orient={<}{1}{0.55}] (60.09,510.29) .. controls (60.09,471.66) and (91.41,440.34) .. (130.04,440.34) .. controls (168.68,440.34) and (200,471.66) .. (200,510.29) .. controls (200,548.93) and (168.68,580.25) .. (130.04,580.25) .. controls (91.41,580.25) and (60.09,548.93) .. (60.09,510.29) -- cycle ;

\begin{scope}[line width=1.5pt]
\draw  [draw opacity=0] (88.64,566.67) .. controls (82.88,568.98) and (76.59,570.25) .. (70,570.25) .. controls (42.39,570.25) and (20,547.9) .. (20,520.33) .. controls (20,492.76) and (42.39,470.41) .. (70,470.41) .. controls (70.73,470.41) and (71.46,470.43) .. (72.19,470.46) -- (70,520.33) -- cycle ; \draw  [orient={<}{1}{0.3},   color={rgb, 255:red, 74; green, 144; blue, 226 }  ,draw opacity=1 ] (88.64,566.67) .. controls (82.88,568.98) and (76.59,570.25) .. (70,570.25) .. controls (42.39,570.25) and (20,547.9) .. (20,520.33) .. controls (20,492.76) and (42.39,470.41) .. (70,470.41) .. controls (70.73,470.41) and (71.46,470.43) .. (72.19,470.46) ;  
\draw  [draw opacity=0] (62.2,494.05) .. controls (48.91,486.23) and (40,471.78) .. (40,455.25) .. controls (40,430.4) and (60.15,410.25) .. (85,410.25) .. controls (104.76,410.25) and (121.55,422.99) .. (127.6,440.71) -- (85,455.25) -- cycle ; \draw  [orient={<}{1}{0.8},   color={rgb, 255:red, 74; green, 144; blue, 226 }  ,draw opacity=1 ] (62.2,494.05) .. controls (48.91,486.23) and (40,471.78) .. (40,455.25) .. controls (40,430.4) and (60.15,410.25) .. (85,410.25) .. controls (104.76,410.25) and (121.55,422.99) .. (127.6,440.71) ;  
\draw  [draw opacity=0] (187.49,470.87) .. controls (189.94,470.46) and (192.44,470.25) .. (195,470.25) .. controls (219.85,470.25) and (240,490.4) .. (240,515.25) .. controls (240,540.1) and (219.85,560.25) .. (195,560.25) .. controls (190.03,560.25) and (185.24,559.44) .. (180.77,557.95) -- (195,515.25) -- cycle ; \draw [orient, color={rgb, 255:red, 74; green, 144; blue, 226 }  ,draw opacity=1 ] (187.49,470.87) .. controls (189.94,470.46) and (192.44,470.25) .. (195,470.25) .. controls (219.85,470.25) and (240,490.4) .. (240,515.25) .. controls (240,540.1) and (219.85,560.25) .. (195,560.25) .. controls (190.03,560.25) and (185.24,559.44) .. (180.77,557.95) ;  
\draw  [draw opacity=0] (181.16,558.12) .. controls (168.38,571.77) and (150.21,580.29) .. (130.04,580.29) .. controls (114.26,580.29) and (99.7,575.07) .. (87.99,566.26) -- (130.04,510.29) -- cycle ; \draw  [orient={<}{1}{0.65},   color={rgb, 255:red, 74; green, 144; blue, 226 }  ,draw opacity=1 ] (181.16,558.12) .. controls (168.38,571.77) and (150.21,580.29) .. (130.04,580.29) .. controls (114.26,580.29) and (99.7,575.07) .. (87.99,566.26) ;  
\draw  [draw opacity=0] (127.24,440.34) .. controls (128.09,440.31) and (128.94,440.29) .. (129.79,440.29) .. controls (154.04,440.29) and (175.41,452.63) .. (187.98,471.36) -- (129.79,510.29) -- cycle ; \draw  [orient,   color={rgb, 255:red, 74; green, 144; blue, 226 }  ,draw opacity=1 ] (127.24,440.34) .. controls (128.09,440.31) and (128.94,440.29) .. (129.79,440.29) .. controls (154.04,440.29) and (175.41,452.63) .. (187.98,471.36) ;   
\draw  [draw opacity=0] (65.7,537.91) .. controls (62.06,529.44) and (60.04,520.1) .. (60.04,510.29) .. controls (60.04,504.51) and (60.74,498.9) .. (62.07,493.52) -- (130.04,510.29) -- cycle ; \draw  [orient, color={rgb, 255:red, 74; green, 144; blue, 226 }  ,draw opacity=1] (65.7,537.91) .. controls (62.06,529.44) and (60.04,520.1) .. (60.04,510.29) .. controls (60.04,504.51) and (60.74,498.9) .. (62.07,493.52) ;  
\draw  [draw opacity=0] (102.91,445.55) .. controls (119,453.86) and (130,470.64) .. (130,490) .. controls (130,517.61) and (107.61,540) .. (80,540) .. controls (74.9,540) and (69.97,539.24) .. (65.34,537.82) -- (80,490) -- cycle ; \draw  [orient,   color={rgb, 255:red, 74; green, 144; blue, 226 }  ,draw opacity=1 ] (102.91,445.55) .. controls (119,453.86) and (130,470.64) .. (130,490) .. controls (130,517.61) and (107.61,540) .. (80,540) .. controls (74.9,540) and (69.97,539.24) .. (65.34,537.82) ; 
 
\draw  [draw opacity=0] (72.02,471.1) .. controls (79.68,459.8) and (90.56,450.87) .. (103.33,445.6) -- (130,510.34) -- cycle ; \draw  [orient,   color={rgb, 255:red, 74; green, 144; blue, 226 }  ,draw opacity=1 ] (72.02,471.1) .. controls (79.68,459.8) and (90.56,450.87) .. (103.33,445.6) ;

\end{scope}

\draw  [draw opacity=0] (165.97,570.4) .. controls (159.76,574.13) and (152.92,576.92) .. (145.64,578.58) -- (130,510.34) -- cycle ; \draw  [orient={<}{1}{0.7}, color={rgb, 255:red, 208; green, 2; blue, 27 }  ,draw opacity=1 ] (165.97,570.4) .. controls (159.76,574.13) and (152.92,576.92) .. (145.64,578.58) ;  

\draw (38,508) node [anchor=north west][inner sep=0.75pt]    {$P_{3}$};
\draw (95,489) node [anchor=north west][inner sep=0.75pt]    {$Q_{3}$};

\draw (168.5,515.5) node [anchor=north west][inner sep=0.75pt]    {$C^{+}$};
\draw (158.25,575.84) node [anchor=north west][inner sep=0.75pt]    {$e$};
\draw (227.75,449.25) node [anchor=north west][inner sep=0.75pt]    {$C_{e}$};
\draw (162.5,425) node [anchor=north west][inner sep=0.75pt]    {$P_{2}$};
\draw (245,509.25) node [anchor=north west][inner sep=0.75pt]    {$Q_{1}$};
\draw (100.5,389) node [anchor=north west][inner sep=0.75pt]    {$Q_{2}$};
\draw (59,435) node [anchor=north west][inner sep=0.75pt]    {$P_{4}$};
\draw (25,568.25) node [anchor=north west][inner sep=0.75pt]    {$Q_{4}$};
\draw (119,588.25) node [anchor=north west][inner sep=0.75pt]    {$P_{1}$};

\end{tikzpicture}
\caption{Decomposition of the cycle $C_e$ (in blue) into 4 jumps $Q_1, Q_2, Q_3, Q_4$ and 4 interjumps $P_1, P_2, P_3, P_4$. The first interjump $P_1$ contains the edge $e$}\label{fig:ex_Ce}
\end{figure}
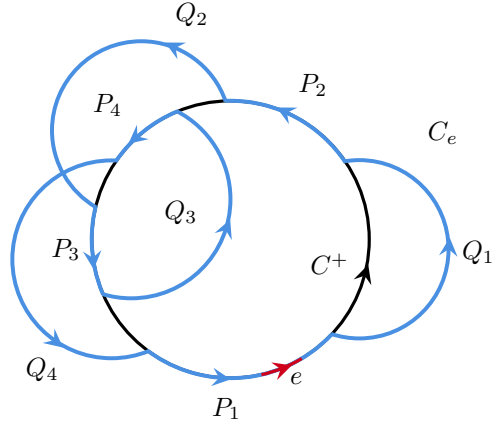
\begin{definition*}[Jumps and interjumps]
Let $e \in \Ep(\Cp)$.
A jump of $C_e$ is a sub-path $Q \subset C_e$ such that its endpoints are vertices of $\Cp$ but none of its inner vertices are in $\Cp$. Note that $Q \subset G_M \setminus \Cp$.  An interjump of $C_e$ is a sub-path of $P \subset C_e$ contained in $\Cp$ such that its endpoints are the endpoints of jumps of $C_e$. Note that $P \subset \Cp$. 
\end{definition*}
\begin{definition*}[Decomposition of $C_e$ into alternating jumps and interjumps]
    Let $e \in \Ep(\Cp)$.
    Decompose $C_e$ into jumps $Q_1, \dots, Q_\ell$ and interjumps $P_1, \dots P_\ell$ such that $e \in P_1$ and $C_e$ is the concatenation of alternating jumps and interjumps:
    $$
    C_e = P_1 Q_1 \dots P_j Q_j \dots  P_\ell Q_\ell
    $$
    Let $\mathcal{J}_e = \{Q_1, Q_2, \dots, Q_\ell\}$ be the set of jumps in $C_e$ and $\mathcal{I}_e = \{P_1, P_2, \dots, P_\ell\}$ be the set of interjumps in $C_e$.
\end{definition*}
Note that every jump in $C_e$ must be followed by an interjump since all cycles considered are $M$-alternating cycles which cannot intersect in a single vertex (by \cref{obs:intersectingC}). 
\cref{fig:ex_Ce} shows a cycle $C_e$ and its decomposition into jumps and interjumps.

To understand the interplay between $C_e$ and $\Cp$, we introduce the notion of forward and backward motion based on a visual intuition. Consider the directed cycle $\Cp$ and take the convention of always drawing it with an anticlockwise orientation.
We say that moving along $\Cp$ in the direction of its directed edges equals to an \emph{anticlockwise} or \emph{forward} motion, while moving on the cycle $\Cp$ against the direction of its directed edges equals to a \emph{clockwise} or \emph{backward} motion. 
We also want to interpret the direction a jump $Q \in \mc J_e$ is taking. However since $Q$ is not a path in $\Cp$ there is no obvious rule whether it should be seen as moving forward or backward. In fact, \cref{fig:forw_back_intuition} shows that the same jump can in principle be interpreted either as following a forward or a backward motion. We introduce the following rule which classifies all jumps as either a forward or a backward jump. We follow the convention to draw forward jumps outside of $\Cp$ and as anticlockwise arrows, while we draw backward jumps inside of $\Cp$ and as clockwise arrows. With this interpretation, in \cref{fig:ex_Ce} the jumps $Q_1, Q_2$ and $Q_4$ are forward and the jump $Q_3$ is backward. 
\input{figures/forw_back_intuition}

\begin{definition*}[Cycle $C_Q$ and forward/backward jump]
    Let $e \in \Ep(\Cp)$ and let $Q$ be a jump of $C_e$. We define $C_Q$ to be the unique directed cycle in the graph $\Cp \cup Q$ containing $Q$. Then:
    \begin{itemize}
        \item if $w_M(C_Q) > 0$, we call $Q$ a \emph{forward} jump;
        \item if $w_M(C_Q) \leq 0$, we call $Q$ a \emph{backward} jump. 
    \end{itemize}
    Let $\mathcal{J}_e^f$ be the set of forward jumps in $C_e$ and $\mathcal{J}_e^b$ the set of backward jumps in $C_e$.
\end{definition*}
Observe that this is a valid definition, since the cycle $C_Q$ is unique and independent of whether $Q$ is a forward or a backward jump. 
We give a short informal explanation of why we use the weight of $C_Q$ to distinguish between forward and backward jumps.
A positive cycle must contain a large number of positive edges (at least $\frac{2}{3}k$ many, by \cref{pty:posC}) and most of the positive edges of $C_Q$ are contained in $C^+$ (since $C_e$ has at most $\frac{1}{3}k$ positive edges).
So for a forward jump $Q_f$, the cycle $C_{Q_f}$ must have a large intersection with $C^+$ to be positive, i.e.\ $Q_f$ only skips a small portion of the cycle. Similarly, for a backward jump $Q_b$, the cycle $C_{Q_b}$ must be a non-positive cycle so it cannot contain too many positive edges and therefore cannot intersect a large part of $C^+$.
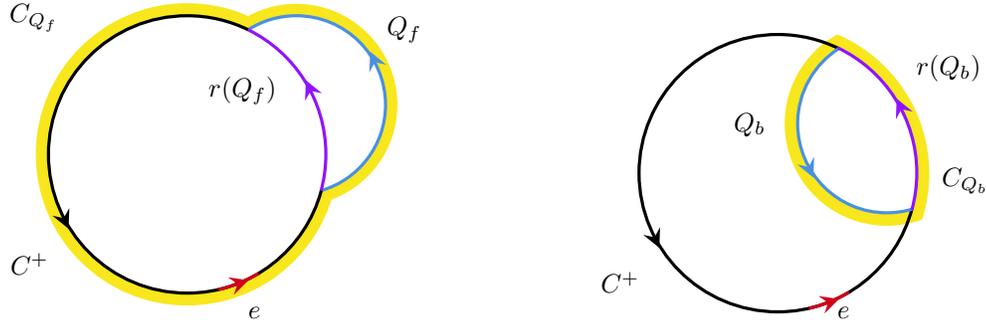
\begin{figure}[h]
    \centering
\begin{subfigure}[b]{0.45\textwidth}
\centering
\begin{tikzpicture}

\draw  [draw opacity=0][line width=4.5]  (139.8,314.19) .. controls (147.12,309.66) and (155.76,307.04) .. (165,307.04) .. controls (191.5,307.04) and (212.99,328.53) .. (212.99,355.04) .. controls (212.99,376.97) and (198.28,395.46) .. (178.18,401.19) -- (165,355.04) -- cycle ; \draw  [color={rgb, 255:red, 248; green, 231; blue, 28 }  ,draw opacity=1 ][line width=4.5]  (139.8,314.19) .. controls (147.12,309.66) and (155.76,307.04) .. (165,307.04) .. controls (191.5,307.04) and (212.99,328.53) .. (212.99,355.04) .. controls (212.99,376.97) and (198.28,395.46) .. (178.18,401.19) ;  
\draw  [draw opacity=0][line width=4.5]  (180.63,399.35) .. controls (172.14,430.44) and (143.7,453.29) .. (109.91,453.29) .. controls (69.44,453.29) and (36.62,420.48) .. (36.62,380) .. controls (36.62,339.52) and (69.44,306.71) .. (109.91,306.71) .. controls (121.87,306.71) and (133.15,309.57) .. (143.12,314.64) -- (109.91,380) -- cycle ; \draw  [color={rgb, 255:red, 248; green, 231; blue, 28 }  ,draw opacity=1 ][line width=4.5]  (180.63,399.35) .. controls (172.14,430.44) and (143.7,453.29) .. (109.91,453.29) .. controls (69.44,453.29) and (36.62,420.48) .. (36.62,380) .. controls (36.62,339.52) and (69.44,306.71) .. (109.91,306.71) .. controls (121.87,306.71) and (133.15,309.57) .. (143.12,314.64) ;  
\draw  [draw opacity=0] (141.21,316.83) .. controls (148.11,312.52) and (156.27,310.04) .. (165,310.04) .. controls (189.85,310.04) and (210,330.18) .. (210,355.04) .. controls (210,375.55) and (196.27,392.86) .. (177.51,398.28) -- (165,355.04) -- cycle ; \draw  [orient, color={rgb, 255:red, 74; green, 144; blue, 226 }  ,draw opacity=1 ] (141.21,316.83) .. controls (148.11,312.52) and (156.27,310.04) .. (165,310.04) .. controls (189.85,310.04) and (210,330.18) .. (210,355.04) .. controls (210,375.55) and (196.27,392.86) .. (177.51,398.28) ;  
\draw  [draw opacity=0] (177.63,398.17) .. controls (169.64,428.04) and (142.39,450.04) .. (110,450.04) .. controls (71.34,450.04) and (40,418.7) .. (40,380.04) .. controls (40,341.38) and (71.34,310.04) .. (110,310.04) .. controls (121.06,310.04) and (131.52,312.6) .. (140.82,317.17) -- (110,380.04) -- cycle ; \draw [orient]  (177.63,398.17) .. controls (169.64,428.04) and (142.39,450.04) .. (110,450.04) .. controls (71.34,450.04) and (40,418.7) .. (40,380.04) .. controls (40,341.38) and (71.34,310.04) .. (110,310.04) .. controls (121.06,310.04) and (131.52,312.6) .. (140.82,317.17) ;  
\draw  [draw opacity=0] (140.81,317.17) .. controls (163.97,328.58) and (179.91,352.43) .. (179.91,380) .. controls (179.91,386.32) and (179.08,392.45) .. (177.51,398.28) -- (109.91,380) -- cycle ; \draw  [orient, color={rgb, 255:red, 144; green, 19; blue, 254 }  ,draw opacity=1 ] (140.81,317.17) .. controls (163.97,328.58) and (179.91,352.43) .. (179.91,380) .. controls (179.91,386.32) and (179.08,392.45) .. (177.51,398.28) ;  
\draw  [draw opacity=0] (145.97,440.1) .. controls (139.76,443.83) and (132.92,446.62) .. (125.64,448.28) -- (110,380.04) -- cycle ; \draw  [orient={<}{1}{0.7}, color={rgb, 255:red, 208; green, 2; blue, 27 }  ,draw opacity=1 ] (145.97,440.1) .. controls (139.76,443.83) and (132.92,446.62) .. (125.64,448.28) ;  

\draw (209,309.04) node [anchor=north west][inner sep=0.75pt]    {$Q_{f}$};
\draw (119,339.04) node [anchor=north west][inner sep=0.75pt]    {$r( Q_{f})$};
\draw (19,428.04) node [anchor=north west][inner sep=0.75pt]    {$C^{+}$};
\draw (18.5,302.29) node [anchor=north west][inner sep=0.75pt]    {$C_{Q_{f}}{}$};
\draw (139,455) node [anchor=north west][inner sep=0.75pt]    {$e$};

\end{tikzpicture}
\end{subfigure}
\hfill
\begin{subfigure}[b]{0.45\textwidth}
\centering
\begin{tikzpicture}

\draw  [draw opacity=0][line width=4.5]  (521.82,530.8) .. controls (516.58,532.73) and (510.91,533.79) .. (505,533.79) .. controls (478.19,533.79) and (456.46,512.06) .. (456.46,485.25) .. controls (456.46,467.02) and (466.51,451.14) .. (481.37,442.84) -- (505,485.25) -- cycle ; \draw  [color={rgb, 255:red, 248; green, 231; blue, 28 }  ,draw opacity=1 ][line width=4.5]  (521.82,530.8) .. controls (516.58,532.73) and (510.91,533.79) .. (505,533.79) .. controls (478.19,533.79) and (456.46,512.06) .. (456.46,485.25) .. controls (456.46,467.02) and (466.51,451.14) .. (481.37,442.84) ;  
\draw  [draw opacity=0][line width=4.5]  (479.06,443.05) .. controls (505.13,454.3) and (523.38,480.25) .. (523.38,510.45) .. controls (523.38,518.08) and (522.22,525.43) .. (520.06,532.34) -- (450,510.45) -- cycle ; \draw  [color={rgb, 255:red, 248; green, 231; blue, 28 }  ,draw opacity=1 ][line width=4.5]  (479.06,443.05) .. controls (505.13,454.3) and (523.38,480.25) .. (523.38,510.45) .. controls (523.38,518.08) and (522.22,525.43) .. (520.06,532.34) ;  
\draw  [draw opacity=0] (517.75,527.91) .. controls (509.93,558.02) and (482.56,580.25) .. (450,580.25) .. controls (411.34,580.25) and (380,548.91) .. (380,510.25) .. controls (380,471.59) and (411.34,440.25) .. (450,440.25) .. controls (461.08,440.25) and (471.57,442.83) .. (480.88,447.41) -- (450,510.25) -- cycle ; \draw  [orient] (517.75,527.91) .. controls (509.93,558.02) and (482.56,580.25) .. (450,580.25) .. controls (411.34,580.25) and (380,548.91) .. (380,510.25) .. controls (380,471.59) and (411.34,440.25) .. (450,440.25) .. controls (461.08,440.25) and (471.57,442.83) .. (480.88,447.41) ;  
\draw  [draw opacity=0] (485.97,570.31) .. controls (479.76,574.04) and (472.92,576.84) .. (465.64,578.5) -- (450,510.25) -- cycle ; \draw  [orient={<}{1}{0.7}, color={rgb, 255:red, 208; green, 2; blue, 27 }  ,draw opacity=1 ] (485.97,570.31) .. controls (479.76,574.04) and (472.92,576.84) .. (465.64,578.5) ;  
\draw  [draw opacity=0] (517.93,528.36) .. controls (513.84,529.59) and (509.5,530.25) .. (505,530.25) .. controls (480.15,530.25) and (460,510.1) .. (460,485.25) .. controls (460,469.24) and (468.37,455.18) .. (480.96,447.2) -- (505,485.25) -- cycle ; \draw  [orient, color={rgb, 255:red, 74; green, 144; blue, 226 }  ,draw opacity=1 ] (517.93,528.36) .. controls (513.84,529.59) and (509.5,530.25) .. (505,530.25) .. controls (480.15,530.25) and (460,510.1) .. (460,485.25) .. controls (460,469.24) and (468.37,455.18) .. (480.96,447.2) ;  
\draw  [draw opacity=0] (480.3,447) .. controls (503.84,458.27) and (520.1,482.32) .. (520.1,510.16) .. controls (520.1,516.63) and (519.22,522.9) .. (517.57,528.85) -- (450.1,510.16) -- cycle ; \draw  [orient, color={rgb, 255:red, 144; green, 19; blue, 254 }  ,draw opacity=1 ] (480.3,447) .. controls (503.84,458.27) and (520.1,482.32) .. (520.1,510.16) .. controls (520.1,516.63) and (519.22,522.9) .. (517.57,528.85) ;  

\draw (426,478) node [anchor=north west][inner sep=0.75pt]    {$Q_{b}$};
\draw (518,449) node [anchor=north west][inner sep=0.75pt]    {$r( Q_{b})$};
\draw (359,558.25) node [anchor=north west][inner sep=0.75pt]    {$C^{+}$};
\draw (531,506) node [anchor=north west][inner sep=0.75pt]    {$C_{Q_{b}}{}$};
\draw (478.25,575.75) node [anchor=north west][inner sep=0.75pt]    {$e$};

    \end{tikzpicture}
\end{subfigure}
    \caption{A forward jump $Q_f$ and a backward jump $Q_b$ (in blue). The respective reaches $r(Q_f)$ and $r(Q_b)$ of the jumps are colored in purple and the cycles $C_{Q_f}$ and $C_{Q_b}$ are highlighted in yellow. 
    }
    \label{fig:jumps_reach}
\end{figure}

Finally, we introduce the notion of the \emph{reach} of a jump $Q$, which is intuitively the sub-path of $\Cp$ the jump $Q$ is ``jumping over''. \cref{fig:jumps_reach} illustrates the reach of a forward jump and a backward jump. We say that the jump $Q$ \emph{covers} the edges in the reach of $Q$.
\begin{definition*}[Reach of a jump]
Let $e \in \Ep(\Cp)$
and let $Q$ be a jump of $C_e$.
    \begin{itemize}
        \item If $Q$ is a {forward} jump: the reach of $Q$ is the sub-path of $\Cp$ defined as $r(Q) := \Cp \setminus C_{Q}$.
        \item If $Q$ is a {backward} jump: the reach of $Q$ is the sub-path of $\Cp$ defined as $r(Q) := C_Q \setminus Q$.
    \end{itemize}
\end{definition*}

 We also extend the definition of the reach of a jump to the reach of $C_e$. Intuitively the reach of $C_e$ is the union of sub-paths of $\Cp$ the cycle $C_e$ is jumping over. Here we differentiate between the sub-paths that are jumped over using \emph{forward motions} (forward jumps or interjumps) and \emph{backward motions} (backward jumps). 
\begin{definition*}[Reach of $C_e$]
    Let $e \in \Ep(\Cp)$.
    The forward reach $r_f(C_e)$ of $C_e$ is  defined as the union of every interjump  of $C_e$ and the reach of every forward jump of $C_e$.
    $$r_f(C_e) :=  \bigcup_{Q \in \mathcal{J}_e^f} r(Q) \cup \bigcup_{P \in \mathcal{I}_e} P $$
    The backward reach $r_b(C_e)$ of $C_e$ is defined as the union of the reaches of every backward jump of $C_e$.
    $$r_b(C_e) :=  \bigcup_{Q \in \mathcal{J}_e^b} r(Q)$$
    The reach $r(C_e)$ of $C_e$ is defined as the union of the forward reach and the backward reach of $C_e$. 
    $$r(C_e) := r_f(C_e)  \cup r_b(C_e)$$
\end{definition*}

All above definitions still hold when replacing $C_e$ by a sub-path $\mathcal{P}$ of $C_e$. Since the reach of a truncated jump is undefined, we will only consider sub-paths which endpoints are vertices of $\Cp$. If $\mathcal{P}$ contains a sub-path of an interjump then the forward reach of $\mathcal{P}$ contains this sub-path and not the entire interjump. 

With those newly defined notions, we note that to achieve \cref{cond:4single_path}, it suffices to prove the following \cref{lem:exists_Qb}.
\begin{lemma}\label{lem:exists_Qb}
    Let $\tup$ be a critical tuple. 
    Every edge $e \in \Ep(\Cp)$ is contained in the backward reach of $C_e$, i.e.\ $\forall e \in \Ep(\Cp):\, e \in r_b(C_e)$. 
\end{lemma}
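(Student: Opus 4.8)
The approach is a proof by contradiction combined with a winding/height argument for the walk around $C_e$. Assume $\tup$ is critical but $e\in\Ep(\Cp)$ satisfies $e\notin r_b(C_e)$. I would traverse the directed cycle $C_e$ starting at the tail $v_0$ of $e$ (so that $e$ is traversed first), and record a \emph{height}: at the start the height is $0$, after traversing the positive edge $e$ it is $1$, along an interjump $P_i\subseteq\Cp$ it grows by $|\Ep(P_i)|$, and when a jump $Q\in\mathcal{J}_e$ is traversed it changes by $+|\Ep(r(Q))|$ if $Q$ is forward and by $-|\Ep(r(Q))|$ if $Q$ is backward. This is the signed count of positive $\Cp$-edges that $Q$ ``moves past'' via the cycle $C_Q$, so the height is a genuine lift to $\mathbb{Z}$ of the circular position on $\Cp$, measured in positive edges.

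After one full traversal the height returns to $c\cdot|\Ep(\Cp)|$, where the integer $c$ is the winding number of $C_e$ around $\Cp$. Counting how often the walk passes over $e$ (forward passes come from the unique interjump containing $e$ and from forward jumps whose reach contains $e$; backward passes come from backward jumps whose reach contains $e$) gives
\[ c \;=\; 1 + |\{Q\in\mathcal{J}_e^f : e\in r(Q)\}| - |\{Q\in\mathcal{J}_e^b : e\in r(Q)\}| . \]
Since $e\notin r_b(C_e)$ the last term vanishes, so $c\ge 1$. Equivalently: heights are integers at vertices, $C_e$ meets $v_0$ exactly once (using \cref{obs:intersectingC} to control intersections of $C_e$ with $\Cp$), and a jump dropping the height from $\ge 1$ down to $\le 0$ would necessarily have its reach contain $e$; hence once $e$ is traversed the height never returns below $1$, forcing $c\cdot|\Ep(\Cp)|\ge 1$.

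On the other hand, writing $\mathcal{J}_e=\mathcal{J}_e^f\cup\mathcal{J}_e^b$ and using that $C_Q$ is the disjoint union of $Q$ with $\Cp\setminus r(Q)$ for a forward jump, and of $Q$ with $r(Q)$ for a backward jump, the contributions telescope to the identity
\[ c\cdot|\Ep(\Cp)| \;=\; |\Ep(C_e)| \;+\; |\mathcal{J}_e^f|\cdot|\Ep(\Cp)| \;-\; \sum_{Q\in\mathcal{J}_e} |\Ep(C_Q)| . \]
Now $w_M(C_e)\le 0$ together with \cref{pty:negG} gives $|\Ep(C_e)|\le|\En(C_e)|<\tfrac13 k$; every forward jump has a positive cycle $C_Q$, so $|\Ep(C_Q)|>\tfrac23 k$ by \cref{pty:posC}; and $\tfrac23 k<|\Ep(\Cp)|\le k$ by \cref{claim:Cp_bounds}. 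Feeding these bounds (and $|\Ep(C_Q)|\ge 0$ for the backward jumps) into the identity, together with $c\ge 1$, yields
\[ |\Ep(\Cp)| \;\le\; c\cdot|\Ep(\Cp)| \;<\; \tfrac13 k + |\mathcal{J}_e^f|\cdot\bigl(|\Ep(\Cp)|-\tfrac23 k\bigr), \]
which is already contradictory whenever $C_e$ has at most two forward jumps.

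The step I expect to be the main obstacle is closing this inequality when $C_e$ has three or more forward jumps, since the per-jump estimate $|\Ep(C_Q)|>\tfrac23 k$ is then too weak on its own. Handling the general case requires the finer geometry of the reaches — for instance, that each forward jump $Q$ with $e\notin r(Q)$ has $\Ep(\Cp)\setminus\Ep(r(Q))$ of size more than $\tfrac13 k$ while $C_Q$ still passes through $e$, and that the height is confined between $1$ and $|\Ep(\Cp)|-1$ because $C_e$ visits $v_0$ only once. I expect organising this bookkeeping (possibly using \cref{lem:keep_atmost2} to thin out overlapping reaches) to be the technical heart of the proof.
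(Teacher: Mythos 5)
Your winding-number identity is correct and the bound $c\ge 1$ does follow from $e\notin r_b(C_e)$, so the cases $|\mathcal{J}_e^f|\le 2$ are indeed closed. But as you yourself flag, that is where the argument stops. The inequality $|\Ep(\Cp)| < \tfrac13 k + |\mathcal{J}_e^f|\bigl(|\Ep(\Cp)|-\tfrac23 k\bigr)$ is not contradictory once $|\mathcal{J}_e^f|\ge 3$ (try $|\Ep(\Cp)|$ close to $k$), and the ``finer geometry'' you invoke to save it is left entirely open. Moreover the auxiliary claim that the height is confined below $|\Ep(\Cp)|-1$ is not obviously true: if some forward jump has $e$ in its reach then $c\ge 2$ and the height exceeds $|\Ep(\Cp)|$. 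So this is a genuinely incomplete proof, with the hard part missing.

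The paper resolves exactly this difficulty, but by a rather different route. Rather than a single global winding identity, it first applies \cref{lem:non_inter_Q} to obtain a collection $\mathcal{Q}'$ of forward jumps whose reaches, together with the interjumps, cover all of $\Cp$ while none of them cover $e$; then it thins $\mathcal{Q}'$ with \cref{lem:keep_atmost2} so each edge is covered at most twice and splits the surviving jumps by parity of their position into $\mathcal{Q}_{odd}$ and $\mathcal{Q}_{even}$, whose reaches are pairwise disjoint within each class. A counting argument (using the covering property to lower-bound $\sum_Q m_Q \ge |\Ep(\Cp)|-|\Ep(C_e)|$) shows that one of the two resulting cycles $\Cp_{odd}$, $\Cp_{even}$ must be non-positive; finally, an interpolation argument along the chain $C^0_{odd}=\Cp,\,C^1_{odd},\dots,\Cp_{odd}$ exhibits two consecutive cycles, one positive and one non-positive, whose $\Ep$-counts violate \cref{obs:2types_cycles}. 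The covering property secured by \cref{lem:non_inter_Q} and the disjointness secured by \cref{lem:keep_atmost2} are precisely the ``finer geometry of the reaches'' that your bookkeeping is missing; they are not a minor refinement but the real content of the proof, carried in \cref{lem:lemma8} and \cref{lem:non_inter_Q}, which your sketch does not reproduce. Without an analogue of those two lemmas your approach cannot rule out the configurations with three or more forward jumps and low winding number, so the proposal as written does not prove the lemma.
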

Indeed, if $e \in r_b(C_e)$, then there exists a backward jump $Q_b$ in $C_e$ such that $e \in r(Q_b)$.
By the above definitions, the cycle $C_{Q_b}$ is a non-positive cycle containing $e$ and intersecting $\Cp$ in a single path. Hence the set of all these backward jumps is a set of cycles satisfying both \cref{cond:1cover_ep,cond:4single_path}. 
The remainder of this subsection is devoted to the proof of \cref{lem:exists_Qb}. The proof technique is a combination of reasonings about forward and backward motion, shown in \cref{lem:lemma8,lem:non_inter_Q}, and an interpolation argument.

Let's start with a basic fact about sub-paths of $C_e$. Let $e_s, e_f$ be two distinct edges on $\Cp$. The following lemma captures the intuitive fact that any path from $e_s$ to $e_f$ must go either in a clockwise motion, or in an anticlockwise motion. While doing so, it must traverse all the edges between $e_s$ and $e_f$ in that direction. 

\begin{lemma}\label{lem:lemma8}
    Let $\tup$ be a critical tuple and fix an edge $e \in \Ep(\Cp)$. If $e_s, e_f$ are distinct edges in $\Cp$ and $\mathcal{P} = C_e[e_s, e_f]$ is a sub-path of $C_e$ from $e_s$ to $e_f$, then either $\Cp[e_s, e_f] \subset r_f(\mathcal{P})$ or $\Cp[e_f, e_s] \subset r_b(\mathcal{P})$.
\end{lemma}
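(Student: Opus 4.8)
The plan is to prove \cref{lem:lemma8} by induction on the number of jumps of $\mathcal{P} = C_e[e_s,e_f]$. Throughout, for vertices or edges $u,v$ of $\Cp$ I write $\Cp[u,v]$ for the sub-path of $\Cp$ from $u$ to $v$ that follows the forward (anticlockwise) orientation; thus $\Cp[u,v]$ and $\Cp[v,u]$ are the two complementary arcs of $\Cp$ between $u$ and $v$. I will use two elementary facts. First, since any edge shared by the two directed cycles $\Cp$ and $C_e$ of $G_M$ is traversed by both of them in its unique orientation in $G_M$, every interjump of $C_e$ runs in the forward direction of $\Cp$; hence an interjump from $u$ to $v$ is exactly $\Cp[u,v]$. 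Second, unwinding the definitions of $C_Q$ and of the reach: a \emph{forward} jump $Q$ from $u$ to $v$ has $r(Q) = \Cp[u,v]$, whereas a \emph{backward} jump $Q$ from $u$ to $v$ has $r(Q) = \Cp[v,u]$. In short, interjumps and forward jumps each contribute to $r_f$ a forward sub-path leading from their start vertex to their end vertex, while a backward jump contributes to $r_b$ the complementary arc, i.e.\ the one obtained by travelling clockwise from its start vertex to its end vertex.

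The base case is immediate: if $\mathcal{P}$ has no jump it lies on $\Cp$ and, being a connected forward-oriented sub-path, equals $\Cp[e_s,e_f]$, so $r_f(\mathcal{P}) = \mathcal{P} \supseteq \Cp[e_s,e_f]$. For the inductive step write $\mathcal{P} = \hat{P}_0\, Q_1\, \mathcal{P}'$, where $\hat{P}_0 = \Cp[e_s,y_0]$ is the part of the first interjump running from $e_s$ to the start vertex $y_0$ of the first jump $Q_1$, the jump $Q_1$ runs from $y_0$ to its end vertex $x_1$, and $\mathcal{P}' = C_e[e',e_f]$ is the remainder (starting with the first edge $e'$ after $x_1$, which lies on $\Cp$) and has one fewer jump. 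Applying the induction hypothesis to $\mathcal{P}'$ — with degenerate cases, such as $\mathcal{P}'$ being a single edge, handled directly — gives $\Cp[x_1,e_f] \subseteq r_f(\mathcal{P}')$ or $\Cp[e_f,x_1] \subseteq r_b(\mathcal{P}')$. One then splits into four cases depending on whether $Q_1$ is forward or backward and which alternative the induction hypothesis supplies. In each case $\hat{P}_0$, $r(Q_1)$ and the arc inherited from $\mathcal{P}'$ are forward sub-paths of $\Cp$, each known to lie inside $r_f(\mathcal{P})$ or inside $r_b(\mathcal{P})$; concatenating two of them (if such a concatenation already sweeps out all of $\Cp$ we are trivially done) yields, for instance, $\Cp[e_s,x_1] \subseteq r_f(\mathcal{P})$ together with $\Cp[e_f,x_1] \subseteq r_b(\mathcal{P})$, and then a short case distinction on the cyclic order of $e_s$, $e_f$, $x_1$ along $\Cp$ finishes: for example, when $Q_1$ is forward and the induction hypothesis gives the $r_b$-alternative, if $e_f$ lies on $\Cp[e_s,x_1]$ then $\Cp[e_s,e_f] \subseteq \Cp[e_s,x_1] \subseteq r_f(\mathcal{P})$, and otherwise $e_s$ lies on $\Cp[e_f,x_1]$, so $\Cp[e_f,e_s] \subseteq \Cp[e_f,x_1] \subseteq r_b(\mathcal{P})$.

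The delicate point, and the step most likely to hide an error, is this case analysis: one must keep careful track of which reach ($r_f$ or $r_b$) each forward sub-path belongs to, treat the ``wrap-around'' configurations where a concatenation of forward sub-paths already covers $\Cp$ entirely, and dispose of the degenerate positions ($x_1 = e_s$, $x_1 = e_f$, or an empty $\hat{P}_0$). A convenient organising device is to lift $\Cp$ to its universal cover $\mathbb{R}$ and assign each vertex of $\mathcal{P}$ lying on $\Cp$ a real ``height'' that increases along interjumps and forward jumps and decreases along backward jumps: then $r_f(\mathcal{P})$ and $r_b(\mathcal{P})$ are the projections of the upward- and downward-traversed intervals, the total change in height equals the forward $\Cp$-distance from $e_s$ to $e_f$ up to an integer multiple of the length of $\Cp$, and the two alternatives of the lemma correspond exactly to which of the two ``monochromatic'' families of intervals reaches far enough. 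Finally, I expect that the critical-tuple hypothesis is not really used beyond providing the setting in which $C_e$, its jump/interjump decomposition, and the reaches are defined — the lemma is essentially a combinatorial fact about how the directed cycle $C_e$ of $G_M$ winds around the directed cycle $\Cp$.
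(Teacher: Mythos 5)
Your underlying picture matches the paper's: the paper introduces the \emph{shadow} $s(\mathcal{P})$, a walk inside $\Cp$ obtained by replacing each interjump by itself, each forward jump by the anticlockwise arc between its endpoints, and each backward jump by the clockwise arc; it then observes that $r_f(\mathcal{P})$ and $r_b(\mathcal{P})$ are exactly the sets of edges that $s(\mathcal{P})$ traverses anticlockwise and clockwise respectively, and concludes with the (asserted ``obvious'') fact that a walk in $\Cp$ from $e_s$ to $e_f$ must either sweep all of $\Cp[e_s,e_f]$ with forward steps or all of $\Cp[e_f,e_s]$ with backward steps. Your ``height'' function is precisely the lift of this shadow walk to the universal cover $\mathbb{R}$, and the intermediate-value-style argument you sketch is in fact the cleanest way to make rigorous the step the paper leaves to the reader. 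Where you genuinely differ is the packaging: you propose an induction on the number of jumps with a local case analysis, whereas the paper reasons about the entire shadow walk at once. The inductive route works — the identities $r(Q)=\Cp[u,v]$ for a forward jump, $r(Q)=\Cp[v,u]$ for a backward jump, and the fact that interjumps are forward arcs are all correct, and the wrap-around and degenerate configurations you flag do resolve — but a full writeup would be noticeably longer than the paper's global argument, and you would be re-proving, step by step, exactly the winding fact the shadow construction makes visible in one stroke. You are also right that the critical-tuple hypothesis is only needed so that $C_e$, the jump/interjump decomposition, and the reaches are well-defined; the lemma itself is a purely combinatorial statement about how one directed cycle winds around another.
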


\begin{proof}
As a helpful tool, we consider the following definition of \emph{shadow}. Let $Q$ be a jump of $C_e$ from $x$ to $y$, where $x$ and $y$ are vertices in $\Cp$. 
If $Q$ is a forward jump, then the shadow $s(Q)$ of $Q$ is the walk in $\Cp$, which starts at $x$ and goes anticlockwise, until it encounters $y$. 
If $Q$ is a backward jump, then $s(Q)$ is the walk in $\Cp$, which starts at $x$ and goes clockwise, until it encounters $y$. We remark that formally, since $s(Q)$ can travel edges in reverse direction, $s(Q)$ is a walk in the undirected analogue of $\Cp$ in $G$.
For an interjump $P$ from vertex $x$ to $y$, the shadow of $P$ is defined to be the path $P$ itself. 
For a sub-path $\mc P \subset C_e$ consisting of the concatenation of alternating jumps and interjumps the shadow $s(\mc P)$ of $\mc P$ is the concatenation of the corresponding shadows of jumps and interjumps in the same order.

By definition, the following basic observations hold. For any sub-path $\mathcal{P} \subseteq C_e$, its shadow $s(\mathcal{P})$ is a walk contained in $\Cp$ such that it has the same start and end vertex as $\mathcal{P}$. Furthermore, the walk $s(\mathcal{P})$ can be thought of as a sequence of steps, where every step traverses one single edge of $\Cp$ either clockwise or anticlockwise. 
The set $r_f(\mathcal{P})$ is exactly the set of edges which are traveled by the walk $s(\mathcal{P})$ with an anticlockwise step. The set $r_b(\mathcal{P})$ is exactly the set of edges which are traveled by the walk $s(\mathcal{P})$ with a clockwise step. 
The set $r(\mathcal{P})$ is exactly the set of edges which are traveled by the walk $s(\mathcal{P})$.
    
Now return to the statement of the lemma. The walk $s(\mathcal{P})$ is a walk contained in $\Cp$, starting with the edge $e_s$ and ending with the edge $e_f$. It is obvious that $s(\mathcal{P})$ must either traverse all of $\Cp[e_s, e_f]$ with anticlockwise steps, or all of $\Cp[e_f, e_s]$ with clockwise steps. This implies that either $\Cp[e_s, e_f] \subset r_f(\mathcal{P})$ or $\Cp[e_f, e_s] \subset r_b(\mathcal{P})$.  
\end{proof}

In the proof of \cref{lem:exists_Qb}, we will consider sets of forward jumps in $C_e$ that cover all the cycle $\Cp$. However we want to avoid forward jumps covering the edge $e$ and thus consider them as ``bad'' jumps. The following \cref{lem:non_inter_Q} shows that under certain assumptions we can still find a set of forward jumps covering all of $\Cp$ without any ``bad'' jump in this set. 

\begin{lemma}\label{lem:non_inter_Q}
    Let $\tup$ be a critical tuple and fix an edge $e \in \Ep(\Cp)$. If $e \notin r_b(C_e)$, then there exists a set of forward jumps $\mc Q \subset \mathcal{J}_e^f$ such that 
    $$\Cp = \bigcup_{Q \in \mc Q} r(Q) \cup \bigcup_{P \in \mathcal{I}_e} P$$
    and $\nexists Q_f \in \mc Q$ with $e \in r(Q_f)$. 
\end{lemma}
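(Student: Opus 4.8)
The plan is to analyze the closed walk $s(C_e)$ (the \emph{shadow} of $C_e$, defined in the proof of \cref{lem:lemma8}) through a winding-number count around $\Cp$.

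First I would set up a counting identity. Since $C_e$ is one of the vertex-disjoint cycles of $M\Delta M^e$ it is a simple cycle, so each shadow of a jump is a simple sub-walk of $\Cp$ and each interjump is a simple sub-path of $\Cp$; in particular every edge of $\Cp$ is hit at most once by each such piece, and $e$ lies in exactly one interjump, namely $P_1$. As in the proof of \cref{lem:lemma8}, the shadow of a forward jump $Q$ traverses exactly the edges of $r(Q)$, each anticlockwise; the shadow of a backward jump $Q$ traverses exactly the edges of $r(Q)$, each clockwise; and an interjump is traversed anticlockwise. Concatenating, $s(C_e)$ is a closed walk in the undirected cycle $\Cp$, and a standard argument shows that the net number of anticlockwise traversals of an edge is the same integer $w$ for \emph{every} edge of $\Cp$. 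Reading this off for a fixed edge $f\in\Cp$ yields
\[ w \;=\; \bigl|\{Q\in\mathcal{J}_e^f : f\in r(Q)\}\bigr| \;+\; \bigl|\{P\in\mathcal{I}_e : f\in P\}\bigr| \;-\; \bigl|\{Q\in\mathcal{J}_e^b : f\in r(Q)\}\bigr|. \]

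Next I would pin down $w$ using the edge $e$. The walk $s(C_e)$ traverses $e$ anticlockwise (as $e\in P_1\in\mathcal{I}_e$), and by the hypothesis $e\notin r_b(C_e)$ it never traverses $e$ clockwise, so $w\ge 1$. Plugging $f=e$ into the identity and using that $e$ lies in exactly one interjump and in the reach of no backward jump gives $\bigl|\{Q\in\mathcal{J}_e^f : e\in r(Q)\}\bigr| = w-1$.

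The core step is then the claim: every edge $f\in\Cp$ that lies in no interjump is contained in $r(Q)$ for some forward jump $Q$ with $e\notin r(Q)$. Suppose not; then $\{Q\in\mathcal{J}_e^f : f\in r(Q)\}\subseteq\{Q\in\mathcal{J}_e^f : e\in r(Q)\}$, so the first set has size at most $w-1$. But applying the identity to $f$, whose interjump term now vanishes, gives $\bigl|\{Q\in\mathcal{J}_e^f : f\in r(Q)\}\bigr| = w + \bigl|\{Q\in\mathcal{J}_e^b : f\in r(Q)\}\bigr| \ge w$, a contradiction. Choosing one such forward jump $Q_f$ for every edge $f\in\Cp\setminus\bigcup_{P\in\mathcal{I}_e}P$ and setting $\mathcal{Q}:=\{Q_f\}$ finishes the proof, since then $\bigcup_{Q\in\mathcal{Q}}r(Q)$ covers $\Cp\setminus\bigcup_{P\in\mathcal{I}_e}P$, the interjumps cover the rest, and no $Q\in\mathcal{Q}$ contains $e$ in its reach.

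I expect the main obstacle to be setting up the winding identity carefully: justifying that the net anticlockwise count is independent of the chosen edge, that each shadow piece contributes $0$ or $1$ to a given edge's count (which is exactly where simplicity of $C_e$ is used), and that the identity is asymmetric between $e$ and a generic uncovered edge $f$ in precisely the way needed — the interjump term equals $1$ for $e$ but $0$ for $f$, and the backward term is $0$ for $e$ by hypothesis but may be positive for $f$. Given that, the final contradiction is a one-line inequality, and no delicate case analysis on the structure of $C_e$ is required.
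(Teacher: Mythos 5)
Your proof is correct, but it takes a genuinely different route from the paper's. The paper's argument is an iterative peeling procedure: it maintains a tuple $(\mc P, \mc R, e')$ satisfying a three-part invariant and repeatedly invokes \cref{lem:lemma8} to split off the last ``bad'' forward jump of $\mc P$, moving the tail into $\mc R$, until $\mc P$ contains no forward jump covering $e$. Your argument instead treats the shadow $s(C_e)$ as a closed walk on the cycle $\Cp$ and exploits the fact that its winding number $w$ (net anticlockwise traversal count) is the same for every edge: vertex balance gives $a_i - b_i = a_{i+1}-b_{i+1}$. Specializing the identity at $e$ (one interjump, zero backward jumps by hypothesis) gives $|\{Q\in\mc J_e^f: e\in r(Q)\}| = w-1$ with $w\ge 1$, while specializing at an uncovered $f$ gives $|\{Q\in\mc J_e^f: f\in r(Q)\}|\ge w$, forcing some forward jump to cover $f$ but not $e$. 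What the paper's route buys is that it stays entirely in the language of \cref{lem:lemma8} and needs no new global invariant beyond that lemma; what your route buys is a shorter, non-iterative, purely counting-based argument that pinpoints the discrepancy — the interjump term equals $1$ at $e$ but $0$ at $f$, and the backward term vanishes at $e$ by hypothesis — and delivers the desired jump directly rather than by exhausting a potential function. Two small points worth tightening in a full write-up: state the vertex-balance calculation explicitly (it is the one non-obvious fact), and note that each shadow piece traverses each edge of $\Cp$ at most once not because $C_e$ is simple per se but because the shadow of a single jump is a simple arc equal to $r(Q)$ and the interjumps are pairwise edge-disjoint sub-paths of $\Cp$.
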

\begin{proof}
    We prove the statement by constructing a tuple $(\mc P, \mc R, e')$, where $e' \in \Cp$ is an edge of $\Cp$, $\mc P = C_e[e, e']$ is a sub-path of $C_e$ and $\mc R$ is a set of jumps and interjumps of $C_e$. We want $(\mc P, \mc R, e')$ to satisfy the following invariant.
    \begin{invariant*}
    A tuple $(\mc P, \mc R, e')$ satisfies the invariant if the following properties hold:
    \begin{enumerate} 
        \item There is no forward jump $Q$ in $\mc R$ such that $e \in r(Q)$.\label[inv]{itm:inv1}
        \item $\Cp[e, e'] \subset r_f(\mc P)$.\label[inv]{itm:inv2}
        \item $\Cp(e', e) \subset r_f(\mc R)$ \label[inv]{itm:inv3}
    \end{enumerate}
    \end{invariant*}
    We also want to ensure that there is no forward jump $Q$ in $\mc P$ with $e\in r(Q)$. 
    Indeed, in that case, define $\mc Q$ to be the set of forward jumps in $\mc P \cup \mc R$. Then by \cref{itm:inv1}, $\mc Q$ doesn't contain any forward jump $Q$ with $e \in r(Q)$. Also, by \cref{itm:inv3,itm:inv2} of $(\mc P, \mc R, e')$:
    \begin{align*}
        \Cp &= \Cp[e, e'] \cup \Cp(e', e)\\
            &\subset r_f(\mc P) \cup r_f(\mc R)\\
            &\subset \bigcup_{Q \in \mc Q} r(Q) \cup \bigcup_{P \in \mathcal{I}_e} P
    \end{align*}
    which is the conclusion of \cref{lem:non_inter_Q}. 

    We start with a tuple $(\mc P, \mc R, e')$ satisfying the invariant properties and then iteratively modify the tuple in order to achieve the additional property that no forward jump in $\mc P$ covers $e$.
    At first, $e'$ is the edge before $e$ in $C_e$. Note that $e' \in \Cp$ because $e \notin M$ and $C_e$ is $M$-alternating so $e' \in M$, and since $\Cp$ is also $M$-alternating, edges of $M$ adjacent to $e$ must be in $\Cp$. Let $\mc P = C_e[e, e']$ be a path starting in $e$ and covering all edges of $C_e$ and let $\mc R = \emptyset$ be an empty set. 
    Clearly \cref{itm:inv1,itm:inv3} holds for $(\mc P, \mc R, e')$. By \cref{lem:lemma8} applied on $\mc P$, either $\Cp[e, e'] \subset r_f(\mc P)$ or $\Cp[e', e] \subset r_b(\mc P)$. However $e \notin r_b(C_e)$ so the latter is not possible. Hence $(\mc P, \mc R, e')$ satisfies all invariant properties. 
    Let $t$ be the number of forward jumps $Q$ in $\mc P$ with $e \in r(Q)$. If $t=0$, we are done.

    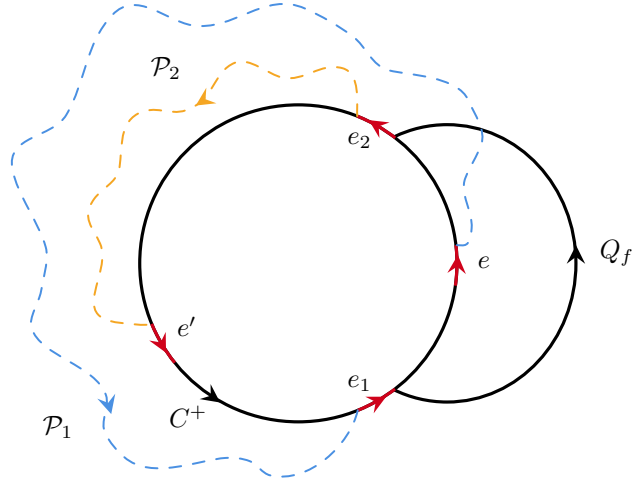
\begin{figure}[h]
    \centering
    
    \begin{tikzpicture}[scale=1]
    
\draw  [orient={<}{}{0.85}] (90,2560) .. controls (90,2515.82) and (125.82,2480) .. (170,2480) .. controls (214.18,2480) and (250,2515.82) .. (250,2560) .. controls (250,2604.18) and (214.18,2640) .. (170,2640) .. controls (125.82,2640) and (90,2604.18) .. (90,2560) -- cycle ;
\draw  [draw opacity=0] (249.48,2550.8) .. controls (249.82,2553.82) and (250,2556.89) .. (250,2560) .. controls (250,2563.83) and (249.73,2567.59) .. (249.21,2571.27) -- (170,2560) -- cycle ; \draw  [orient={<}{}{0.7}, color={rgb, 255:red, 208; green, 2; blue, 27 }  ,draw opacity=1 ] (249.48,2550.8) .. controls (249.82,2553.82) and (250,2556.89) .. (250,2560) .. controls (250,2563.83) and (249.73,2567.59) .. (249.21,2571.27) ;  
\draw  [draw opacity=0] (218.46,2496.08) .. controls (226.56,2492.17) and (235.55,2490) .. (245,2490) .. controls (280.9,2490) and (310,2521.34) .. (310,2560) .. controls (310,2598.66) and (280.9,2630) .. (245,2630) .. controls (235.38,2630) and (226.25,2627.75) .. (218.03,2623.71) -- (245,2560) -- cycle ; \draw [orient]  (218.46,2496.08) .. controls (226.56,2492.17) and (235.55,2490) .. (245,2490) .. controls (280.9,2490) and (310,2521.34) .. (310,2560) .. controls (310,2598.66) and (280.9,2630) .. (245,2630) .. controls (235.38,2630) and (226.25,2627.75) .. (218.03,2623.71) ;  
\draw  [draw opacity=0] (199.65,2485.68) .. controls (206.59,2488.45) and (213.06,2492.16) .. (218.88,2496.66) -- (170,2560) -- cycle ; \draw  [orient={<}{}{0.7}, color={rgb, 255:red, 208; green, 2; blue, 27 }  ,draw opacity=1 ] (199.65,2485.68) .. controls (206.59,2488.45) and (213.06,2492.16) .. (218.88,2496.66) ;  
\draw  [draw opacity=0] (218.73,2623.45) .. controls (213.03,2627.84) and (206.72,2631.47) .. (199.95,2634.2) -- (170,2560) -- cycle ; \draw  [orient={<}{}{0.7}, color={rgb, 255:red, 208; green, 2; blue, 27 }  ,draw opacity=1 ] (218.73,2623.45) .. controls (213.03,2627.84) and (206.72,2631.47) .. (199.95,2634.2) ;  
\draw  [draw opacity=0] (107.69,2610.17) .. controls (102.92,2604.27) and (98.99,2597.67) .. (96.04,2590.56) -- (170,2560) -- cycle ; \draw  [orient={<}{}{0.7}, color={rgb, 255:red, 208; green, 2; blue, 27 }  ,draw opacity=1 ] (107.69,2610.17) .. controls (102.92,2604.27) and (98.99,2597.67) .. (96.04,2590.56) ;  
\draw [orient={>}{}{0.4}, color={rgb, 255:red, 245; green, 166; blue, 35 }  ,draw opacity=1 ][line width=0.75] [line join = round][line cap = round] [dash pattern={on 4.5pt off 4.5pt}]  (199.65,2485.68) .. controls (199.44,2479.55) and (201.56,2464.04) .. (192.25,2461.5) .. controls (180.97,2458.42) and (173.03,2468.5) .. (162,2466.25) .. controls (154.68,2464.76) and (145.61,2455.71) .. (138,2458.75) .. controls (126.72,2463.26) and (128.03,2477.72) .. (113.25,2482.75) .. controls (104.74,2485.64) and (97.99,2479.65) .. (89,2484.5) .. controls (78.79,2490.01) and (81.51,2504.18) .. (79.5,2513) .. controls (78.08,2519.22) and (70.7,2522.3) .. (67,2527.5) .. controls (60.15,2537.12) and (66.6,2542.15) .. (70,2550) .. controls (76.41,2564.8) and (53.4,2587.31) .. (80,2590) .. controls (83.54,2590.36) and (92.71,2591.81) .. (96.04,2590.56) ;
\draw [orient={<}{}{0.25}, color={rgb, 255:red, 74; green, 144; blue, 226 }  ,draw opacity=1 ][line width=0.75] [line join = round][line cap = round] [dash pattern={on 4.5pt off 4.5pt}]  (199.95,2634.2) .. controls (196.15,2651.32) and (175.46,2671.53) .. (155.8,2667.01) .. controls (151.69,2666.07) and (149.32,2661.36) .. (145.4,2659.81) .. controls (141.64,2658.34) and (137.43,2658) .. (133.4,2658.21) .. controls (120.43,2658.89) and (96.01,2669.9) .. (82.2,2657.81) .. controls (78.34,2654.44) and (74.85,2650.41) .. (72.6,2645.81) .. controls (69.75,2640) and (74.93,2632.24) .. (73,2626.61) .. controls (67.7,2611.16) and (40,2607.12) .. (35.4,2592.61) .. controls (33.78,2587.52) and (39.31,2579.49) .. (40.2,2574.21) .. controls (41.29,2567.77) and (44.36,2558.11) .. (42.6,2551.81) .. controls (39.03,2539.02) and (28.29,2525.66) .. (25.8,2512.61) .. controls (22.46,2495.14) and (45.96,2497.01) .. (53,2485.01) .. controls (62.98,2468.01) and (62.34,2449.34) .. (76.2,2434.61) .. controls (88.33,2421.72) and (103.11,2437.56) .. (116.2,2439.81) .. controls (137.16,2443.43) and (152.86,2422.18) .. (173.4,2431.41) .. controls (181.85,2435.21) and (189.97,2439.88) .. (197.4,2445.41) .. controls (204.5,2450.7) and (206.8,2460.3) .. (215.8,2462.61) .. controls (225.39,2465.08) and (250.39,2465.28) .. (255.8,2477.81) .. controls (258.88,2484.94) and (264.85,2492.92) .. (262.2,2500.21) .. controls (260.08,2506.04) and (256.45,2511.23) .. (254.2,2517.01) .. controls (251.13,2524.9) and (258.06,2533.77) .. (257.4,2542.21) .. controls (256.58,2552.73) and (253.7,2550.8) .. (249.48,2550.8) ;

\draw (39.4,2635.4) node [anchor=north west][inner sep=0.75pt]    {$\mc P_{1}$};
\draw (103,2630) node [anchor=north west][inner sep=0.75pt]    {$C^{+}$};
\draw (258.8,2555) node [anchor=north west][inner sep=0.75pt]    {$e$};
\draw (193,2492.2) node [anchor=north west][inner sep=0.75pt]    {$e_{2}$};
\draw (193,2615) node [anchor=north west][inner sep=0.75pt]    {$e_{1}$};
\draw (320.2,2546.6) node [anchor=north west][inner sep=0.75pt]    {$Q_{f}$};
\draw (107,2585) node [anchor=north west][inner sep=0.75pt]    {$e'$};
\draw (93.8,2454.2) node [anchor=north west][inner sep=0.75pt]    {$\mc P_{2}$};

    \end{tikzpicture}

\caption{
To prove \cref{lem:non_inter_Q} we consider a path $\mathcal{P} = C_e[e, e']$ that contains a jump $Q_f$ with $e \in r(Q_f)$. $e_1$ and $e_2$ are the edges before and after the jump $Q_f$. We split $\mathcal{P}$ into two paths $\mathcal{P}_1$ (blue) and $\mathcal{P}_2$ (orange) that are respectively before and after the jump $Q_f$. 
}\label{fig:l6_usel8}
\end{figure}

    If $t > 0$, we replace $(\mc P, \mc R, e')$ by another tuple  $(\mc P', \mc R', e_1)$ where $\mc P'$ contains $t-1$ forward jumps covering $e$. We ensure that $(\mc P', \mc R', e_1)$ also satisfies the invariant properties, so that we can repeat the process until $t =0$. 
    Let $Q_f$ be the last forward jump in $\mc P$ covering $e$ and let $e_1$ and $e_2$ be the first edges in $\Cp$ before and after $Q_f$ (see \cref{fig:l6_usel8}). Consider the sub-paths $\mc P_1 = C_e[e, e_1]$ and $\mc P_2 = C_e[e_2, e']$ and define $\mc P' := \mc P_1$ and $\mc R' := \mc R \cup \mc P_2$. 
    By definition, $\mc P'$ is the sub-path of $\mc P$ before $Q_f$ (excluded), hence it contains $t-1$ forward jumps covering $e$. 
    It remains to show that $(\mc P', \mc R', e_1)$ satisfies the invariant. 
    By definition of $Q_f$, there is no forward jump in $\mc P_2$ covering $e$. Thus, since
    $(\mc P, \mc R, e')$ satisfies \cref{itm:inv1}, $(\mc P', \mc R', e_1)$ also satisfies \cref{itm:inv1}.
    We can apply \cref{lem:lemma8} on $\mc P_1$ and get that either $\Cp[e, e_1] \subset r_f(\mc P_1)$ or $\Cp[e_1, e] \subset r_b(\mc P_1)$. However $e \notin r_b(C_e)$ so the latter case is not possible, hence $(\mc P', \mc R', e_1)$ satisfies \cref{itm:inv2}.
    Finally, to prove that $\Cp(e_1, e) \subset r_f(\mc R')$, consider two cases.
    If $\Cp(e_1, e) \subset \Cp(e', e)$ (as in \cref{fig:l6_usel8}) then $\Cp(e_1, e) \subset r_f(\mc R) \subset r_f(\mc R')$ by \cref{itm:inv3} of $(\mc P, \mc R, e')$.
    Otherwise if $\Cp(e', e) \subset \Cp(e_1, e)$, then the edges $e_2, e_1, e', e$ appear in that order on $\Cp$. We can thus write $\Cp(e_1, e) = \Cp(e_1, e'] \cup \Cp(e', e)$. By \cref{itm:inv3} of $(\mc P, \mc R, e')$, we know that $\Cp(e', e) \subset r_f(\mc R)$. Note that $\Cp(e_1, e'] \subset \Cp[e_2, e']$. Applying \cref{lem:lemma8} on $\mc P_2$ we get that either $\Cp[e_2, e'] \subset r_f(\mc P_2)$ or $\Cp[e', e_2] \subset r_b(\mc P_2)$. But $e \in \Cp[e', e_2]$ and $e \notin r_b(C_e)$ so the latter case is not possible. Hence $\Cp(e_1, e'] \subset \mc P_2$ and we thus have $\Cp(e_1, e) \subset r_f(\mc P_2) \cup r_f(\mc R) = r_f(\mc R')$. 
    So in both cases \cref{itm:inv3} holds for $(\mc P', \mc R', e_1)$. 
    Therefore, we can safely replace $(\mc P, \mc R, e')$ by $(\mc P', \mc R', e_1)$ and maintain the invariant.
\end{proof}

We now have collected all the ingredients to prove the main result of this subsection. While the previous lemmas were proven using an intuition on forward and backward motion, \cref{lem:exists_Qb} is proven using an interpolation argument.

\begin{proof}[Proof of \cref{lem:exists_Qb}]
For the sake of contradiction, assume that there exists an edge $e \in \Ep(\Cp)$ such that $e \notin r_b(C_e)$. 
Apply \cref{lem:non_inter_Q} and let $\mc Q'$ be the resulting set of forward jumps. By identifying every jump in $\mc Q'$ with its reach, we can apply \cref{lem:keep_atmost2} and get the resulting set $\mc Q \subset \mc Q'$. Consequently, $\Cp = \bigcup_{Q \in \mc Q} r(Q) \cup \bigcup_{P \in \mathcal{I}_e} P$, there is no jump $Q \in \mc Q$ covering $e$ and every edge $e' \in \Cp$ is contained in the reach of at most two jumps of $\mc Q$. 
Order the jumps in $\mc Q$ by the distance between $e$ and their reach and write $\mc Q = \{Q_1, Q_2, \dots, Q_t\}$. Define $\mc Q_{odd} = \{Q_1, Q_3, \dots\}$ and $\mc Q_{even} = \{Q_2, Q_4, \dots\}$ to be the set of jumps of respectively odd and even index. Since there is no forward jump in $\mc Q$ covering $e$, the first and the last jump in $\mc Q$ are non-intersecting (otherwise the reach of one of them would contain $e$). Additionally, the reach of a jump $Q_i$ can only intersect the reach of $Q_{i-1}$ and $Q_{i+1}$ (by the property obtained by \cref{lem:keep_atmost2}).
Hence the reach of any two jumps in $\mc Q_{odd}$ as well as the reach of any two jumps in $\mc Q_{even}$ are not intersecting. 

We can thus define two cycles $\Cp_{odd} = \Cp \cup \mc Q_{odd} \setminus \{r(Q_1), r(Q_3), \dots\} $ and $\Cp_{even} = \Cp \cup \mc Q_{even} \setminus \{r(Q_2), r(Q_4), \dots\} $. We show that at least one of them is non-positive using \cref{obs:2types_cycles}. Indeed, the number of positive edges in $\Cp_{odd}$ is
\begin{align*}
    |\Ep(\Cp_{odd})| &= |\Ep(\Cp)| + \sum_{Q \in \mc Q_{odd}} |\Ep(Q)| - \sum_{Q \in \mc Q_{odd}} |\Ep(r(Q))| \\
    &= |\Ep(\Cp)| - \sum_{Q \in \mc Q_{odd}} m_Q
\end{align*}
where we define $m_Q := |\Ep(r(Q))| - |\Ep (Q)|$. Assume for now that $\sum\limits_{Q\in \mc Q_{odd}} m_Q \geq \sum\limits_{Q\in \mc Q_{even}} m_Q$ so that $\sum\limits_{Q \in \mc Q} m_Q = \sum\limits_{Q\in \mc Q_{odd}} m_Q + \sum\limits_{Q\in \mc Q_{even}} m_Q \leq 2 \sum\limits_{Q\in \mc Q_{odd}} m_Q$. So we get $$|\Ep(\Cp_{odd})| \leq  |\Ep(\Cp)| - \frac{1}{2} \sum\limits_{Q \in \mc Q}m_Q$$
and we also have
\begin{align*}
    \sum_{Q \in \mc Q} m_Q &= \sum_{Q\in \mc Q} |\Ep(r(Q))|  - \sum_{Q \in \mc Q} |\Ep(Q)| \\
    &= \left( \sum_{Q\in \mc Q} |\Ep(r(Q))| + \sum_{P \in \mathcal{I}_e}|\Ep(P)| \right) - \left( \sum_{P \in \mathcal{I}_e}|\Ep(P)| + \sum_{Q \in \mc Q} |\Ep(Q)| \right) \\
    &\geq |\Ep(\Cp)| - \left( \sum_{P \in \mathcal{I}_e}|\Ep(P)| + \sum_{Q \in \mc Q} |\Ep(Q)| \right)  \tag*{\text{(by the construction of $\mc Q$)}} \\
    &\geq |\Ep(\Cp)| - |\Ep(C_e)| \\
\end{align*}
where the last inequality comes from the fact that $C_e$ can contain jumps outside of $\mc Q$. We finally get the following bound on the number of positive edges in $\Cp_{odd}$.
\begin{align*}
    |\Ep(\Cp_{odd})| &\leq |\Ep(\Cp)| - \frac{1}{2}(|\Ep(\Cp)| - |\Ep(C_e)|) \\
    &= \frac{1}{2}|\Ep(\Cp)| + \frac{1}{2}|\Ep(C_e)| \\
    &< \frac{1}{2}k + \frac{1}{2} \cdot \frac{1}{3}k \tag*{\text{(by \cref{claim:Cp_bounds,pty:negC})}}\\
    &=\frac{2}{3}k
\end{align*}
By \cref{pty:posC}, we can conclude that $\Cp_{odd}$ is a non-positive cycle. 
If $\sum_{Q\in \mc Q_{odd}} m_Q \leq \sum_{Q\in \mc Q_{even}} m_Q$ then the same argument on $\Cp_{even}$ shows that $\Cp_{even}$ is a non-positive cycle. 
In the following we assume that $\Cp_{odd}$ is a non-positive cycle, but the reasoning can be adapted for the case that $\Cp_{even}$ is non-positive by interchanging $odd$ by $even$. 

\input{figures/l6_sequence}
Define a sequence of cycles $C^{i + 1}_{odd} := C^{i}_{odd}  \cup Q_{2i + 1} \setminus r(Q_{2i + 1})$ for $i \in |\mc Q_{odd}|$ starting at $C^0_{odd} := \Cp$ and ending at $\Cp_{odd}$. See \cref{fig:l6_sequence} for an example. The constructed $C^{i}_{odd}$'s are simple cycles since the reaches in $\mc Q_{odd}$ are non-intersecting. The sequence starts with a positive cycle $\Cp$ and ends with a non-positive cycle $\Cp_{odd}$ so there must exists an index $i$ such that $C^{i}_{odd}$ is positive and $C^{i+1}_{odd}$ is non-positive. However the number of positive edges in the non-positive cycle $C^{i+1}_{odd}$ is 
\begin{align*}
    |\Ep(C^{i+1}_{odd})| &= |\Ep(C^{i}_{odd})| + |\Ep(Q_{2i+1})|  - |\Ep(r(Q_{2i+1})| \\
    &= |\Ep(C^{i}_{odd})| + |\Ep(C_{Q_{2i+1}})| - |\Ep(\Cp)| \\
    &> \frac{2}{3}k + \frac{2}{3}k - k \tag*{\text{(by \cref{pty:posC,claim:Cp_bounds})}} \\
    &=\frac{1}{3}k
\end{align*}
By \cref{pty:negC}, this implies that $C^{i+1}_{odd}$ is positive, but we previously argued that it is non-positive. Hence it is not possible that $e \notin r_b(C_e)$.
\end{proof}

With \cref{lem:exists_Qb}, we can now construct a set of non-positive cycles satisfying \cref{cond:1cover_ep,cond:4single_path}.

\begin{lemma}\label{lem:get14}
    Let $\tup$ be a critical tuple. There exists a set $\mathcal{C}$ of non-positive cycles in $G_M$ satisfying \cref{cond:1cover_ep,cond:4single_path}, i.e.\ such that the following holds:
    \begin{itemize}
        \item $\forall e \in \Ep(\Cp)$ there exists a cycle in $\mathcal{C}$ containing $e$.
        \item $\forall C \in \mathcal{C},\, C \cap \Cp$ is a single path.
    \end{itemize}
\end{lemma}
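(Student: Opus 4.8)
The plan is to obtain \cref{lem:get14} as an essentially immediate consequence of \cref{lem:exists_Qb}; this is really just a packaging step, since all of the difficulty has already been spent. First I would apply \cref{lem:exists_Qb}: for every $e \in \Ep(\Cp)$ we have $e \in r_b(C_e)$. Unfolding the definition of the backward reach $r_b(C_e) = \bigcup_{Q \in \mathcal{J}_e^b} r(Q)$, this says there is at least one backward jump $Q_e \in \mathcal{J}_e^b$ of $C_e$ whose reach contains $e$. I would fix one such $Q_e$ for each $e \in \Ep(\Cp)$ and let $C_{Q_e}$ denote the associated cycle, i.e.\ the unique directed cycle in $\Cp \cup Q_e$ containing $Q_e$ (recall this cycle is well-defined regardless of whether $Q_e$ is a forward or a backward jump).

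Next I would set $\mathcal{C} := \{ C_{Q_e} : e \in \Ep(\Cp) \}$ and verify the two required properties, together with non-positivity of the cycles. For non-positivity: by the very definition of a backward jump, $w_M(C_{Q_e}) \leq 0$, so every member of $\mathcal{C}$ is a non-positive cycle. For \cref{cond:1cover_ep}: by the choice of $Q_e$ we have $e \in r(Q_e)$, and since $Q_e$ is a backward jump, $r(Q_e) = C_{Q_e} \setminus Q_e \subseteq C_{Q_e}$; hence the cycle $C_{Q_e} \in \mathcal{C}$ contains $e$. For \cref{cond:4single_path}: the jump $Q_e$ is edge-disjoint from $\Cp$ and internally vertex-disjoint from it, and $C_{Q_e}$ is formed by concatenating $Q_e$ with a single sub-path of $\Cp$, so $C_{Q_e} \cap \Cp = C_{Q_e} \setminus Q_e = r(Q_e)$, which is a single path.

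I do not expect any real obstacle here: the hard work is concentrated in \cref{lem:exists_Qb} and, beneath it, in \cref{lem:lemma8,lem:non_inter_Q}. The only minor point that needs a word of justification is that $C_{Q_e}$ is a genuine simple directed cycle whose intersection with $\Cp$ is a single non-degenerate path; this holds because all the cycles in play are $M$-alternating, and by \cref{obs:intersectingC} two $M$-alternating cycles cannot meet in just a single vertex, so the two endpoints of $Q_e$ on $\Cp$ are distinct and the concatenation described above really is a cycle. Accordingly I would write the proof of \cref{lem:get14} in only a few lines following this outline.
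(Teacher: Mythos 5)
Your proposal is correct and follows essentially the same route as the paper's own proof: invoke \cref{lem:exists_Qb} to obtain for each $e \in \Ep(\Cp)$ a backward jump $Q_e$ with $e \in r(Q_e)$, take $\mathcal{C} = \{C_{Q_e} : e \in \Ep(\Cp)\}$, and read off non-positivity and \cref{cond:1cover_ep,cond:4single_path} directly from the definitions of backward jump and reach. The extra remark about \cref{obs:intersectingC} ensuring $C_{Q_e}$ is a genuine cycle is a fine clarifying detail that the paper leaves implicit.
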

\begin{proof}
    By \cref{lem:exists_Qb}, for each edge $e \in \Ep(\Cp)$ there exists a backward jump in $C_e$, call it $Q_e$, that covered $e$. Since it is a backward jump, the cycle $C_{Q_e}$ formed by $Q_e$ and its reach is a non-positive cycle. It also contains $e$ and intersects $\Cp$ in a single path $r(Q_e)$. Hence the set $\mathcal{C} := \{C_{Q_e}: e \in \Ep(\Cp), Q_e \in \mathcal{J}_e^b, e \in r(Q_e)\}$ satisfies \cref{cond:1cover_ep,cond:4single_path}.
\end{proof}

\subsection{Modifying the set to satisfy Condition \ref{cond:34cover_en}}\label{subsec:reduce-cover}
In the previous section we obtained a set $\mathcal{C}$ of non-positive cycles satisfying \cref{cond:1cover_ep,cond:4single_path}. To additionally satisfy \cref{cond:34cover_en}, which is that for every negative edge $e \in \En(G_M)$ there exist at most two cycles in $\mathcal{C}$ containing $e$, we proceed in two steps. First we focus in \cref{lem:get_cond3} on bounding the number of cycles containing a negative edge on the cycle $\Cp$. Then we prove in \cref{lem:cycle_fusion} how to bound the number of cycles containing a negative edge outside the cycle $\Cp$. 

\begin{lemma}\label{lem:get_cond3}
Let $\tup$ be a critical tuple and $\mathcal{C}$ be a set of non-positive cycles in $G_M$ satisfying \cref{cond:1cover_ep,cond:4single_path}, i.e.\ such that:
    \begin{itemize}
        \item $\forall e \in \Ep(\Cp)$ there exists a cycle in $\mathcal{C}$ containing $e$.
        \item $\forall C \in \mathcal{C},\, C \cap \Cp$ is a single path.
    \end{itemize}
    Then there exists a set $\mathcal{C'} \subset \mathcal{C}$ of non-positive cycles satisfying the above \cref{cond:1cover_ep,cond:4single_path} as well as the following property:
    \begin{itemize}
        \item $\forall e \in \En(\Cp)$ there are at most two cycles in $\mathcal{C}$ containing $e$.
    \end{itemize}
\end{lemma}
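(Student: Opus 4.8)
The plan is to reduce this to the purely combinatorial statement of \cref{lem:keep_atmost2}, applied not to an arbitrary collection of paths but to the traces of the cycles of $\mathcal{C}$ on $\Cp$. Concretely, for each cycle $C \in \mathcal{C}$, \cref{cond:4single_path} guarantees that $C \cap \Cp$ is a single path; write $P_C := C \cap \Cp$. Set $\mathcal{P}' := \{P_C : C \in \mathcal{C}\}$, a set of paths in $\Cp$. By \cref{lem:keep_atmost2} there is a subset $\mathcal{P} \subseteq \mathcal{P}'$ with $\bigcup_{P \in \mathcal{P}} P = \bigcup_{P' \in \mathcal{P}'} P'$ such that every edge of $\Cp$ lies in at most two paths of $\mathcal{P}$.

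Next I would pull $\mathcal{P}$ back to a subfamily of $\mathcal{C}$. For every $P \in \mathcal{P}$ pick one cycle $\varphi(P) \in \mathcal{C}$ with $\varphi(P) \cap \Cp = P$ (such a cycle exists because $P \in \mathcal{P}'$), and set $\mathcal{C}' := \{\varphi(P) : P \in \mathcal{P}\} \subseteq \mathcal{C}$. The map $\varphi$ is injective, since $\varphi(P) \cap \Cp = P$ recovers $P$ from $\varphi(P)$, so $\mathcal{C}'$ and $\mathcal{P}$ are in bijection.

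It then remains to verify the three properties. \cref{cond:4single_path} is inherited for free, since $\mathcal{C}' \subseteq \mathcal{C}$. For \cref{cond:1cover_ep}: given $e \in \Ep(\Cp)$, \cref{cond:1cover_ep} for $\mathcal{C}$ yields a cycle $C \in \mathcal{C}$ with $e \in C$, hence $e \in P_C \in \mathcal{P}'$; since $\bigcup_{P \in \mathcal{P}} P = \bigcup_{P' \in \mathcal{P}'} P'$, some $P \in \mathcal{P}$ contains $e$, and then $\varphi(P) \in \mathcal{C}'$ contains $e$. For the new property: let $e \in \En(\Cp)$, so $e$ is an edge of $\Cp$. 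For a cycle $\varphi(P) \in \mathcal{C}'$ we have $e \in \varphi(P)$ if and only if $e \in \varphi(P) \cap \Cp = P$, so via the bijection $\varphi$ the cycles of $\mathcal{C}'$ containing $e$ correspond exactly to the paths of $\mathcal{P}$ containing $e$, of which there are at most two.

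The argument is essentially bookkeeping, and its substance is entirely contained in \cref{lem:keep_atmost2}. The only point that needs (minor) care is the pull-back step: several distinct cycles of $\mathcal{C}$ may have the same trace on $\Cp$, so one must fix a single representative cycle per trace in $\mathcal{P}$, otherwise an edge of $\Cp$ could end up in more than two cycles of $\mathcal{C}'$ even though it lies in at most two paths of $\mathcal{P}$. I do not expect any real obstacle beyond this.
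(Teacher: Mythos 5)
Your proof takes essentially the same route as the paper: project each cycle to its trace $P_C = C\cap \Cp$ on $\Cp$, apply \cref{lem:keep_atmost2} to the resulting family of paths, and pull back a subfamily of cycles. The one refinement you add — choosing a \emph{single} representative cycle $\varphi(P)$ per surviving trace $P$ — is genuinely worth making explicit, since the paper's phrasing $\mathcal{C}' := \{C : P_C \in \mathcal{P}'\}$ could, if read literally, admit several cycles sharing the same trace and thereby break the two-cycles-per-edge bound; your injective choice function closes that small gap cleanly.
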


\begin{proof}
    By \cref{cond:4single_path}, we can associate to each cycle $C$ of $\mathcal{C}$ the sub-path $P_C :=C \cap \Cp$ intersecting with $\Cp$. 
    Let $\mathcal{P} := \{P_C : C \in \mathcal{C}\}$ be the set of such intersecting paths. By \cref{cond:1cover_ep} for every edge $e \in \Ep(\Cp)$ there exists a path $P \in \mathcal{P}$ containing $e$. We can thus apply \cref{lem:keep_atmost2} to $\mathcal{P}$ and get a set $\mathcal{P}' \subset \mathcal{P}$ such that for every edge $e \in \Ep(\Cp)$ there exists a path $P \in \mathcal{P}'$ containing $e$  and every $e \in C^+$ is contained in at most two paths from $\mathcal{P}'$. 
    Let $\mathcal{C}' := \{C : P_C \in \mathcal{P}'\}$ be the set of cycles associated to each path in $\mathcal{P}'$. Then $\mathcal{C}' \subset \mathcal{C}$ still satisfies  \cref{cond:1cover_ep,cond:4single_path} and additionally for every edge $e \in C^+$ at most two cycles in $\mathcal{C}'$ contain $e$.
\end{proof}

\begin{lemma}\label{lem:cycle_fusion}
    Let $\tup$ be a critical tuple and $\mathcal{C}$ be a set of non-positive cycles in $G_M$ satisfying \cref{cond:1cover_ep,cond:4single_path,}, i.e.\ such that:
    \begin{itemize}
        \item $\forall e \in \Ep(\Cp)$ there exists a cycle in $\mathcal{C}$ containing $e$.
        \item $\forall C \in \mathcal{C},\, C \cap \Cp$ is a single path.
    \end{itemize}
    Then there exists a set $\mathcal{C'} \subset \mathcal{C}$ of non-positive cycles satisfying the above \cref{cond:1cover_ep,cond:4single_path} as well as the following property:
    \begin{itemize}
        \item $\forall e \in \En(G_M \setminus \Cp)$ there are at most two cycles in $\mathcal{C}$ containing $e$.
    \end{itemize}
\end{lemma}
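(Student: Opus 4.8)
The plan is to reduce the over-coverage of the negative edges lying \emph{outside} $\Cp$ by a \emph{fusion} operation, playing the role for those edges that \cref{lem:keep_atmost2} played for edges \emph{on} $\Cp$ in the proof of \cref{lem:get_cond3}. (Fusion introduces new cycles, so the resulting $\mathcal{C}'$ need not be literally contained in $\mathcal{C}$, but this is immaterial for how the lemma is used.) Fix a negative edge $e=(a,b)\in\En(G_M\setminus\Cp)$ contained in at least three cycles of $\mathcal{C}$. By \cref{cond:4single_path}, every cycle $C\in\mathcal{C}$ through $e$ decomposes as an arc $P_C:=C\cap\Cp=\Cp[u_C,v_C]$ together with a single jump $Q_C$ lying in $G_M\setminus\Cp$ apart from its endpoints and passing through $e$; since $C$ is non-positive we have $C=C_{Q_C}$, $Q_C$ is a backward jump, and $r(Q_C)=P_C$. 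Split the jump at $e$ as $Q_C=\alpha_C\cdot e\cdot\beta_C$, with $\alpha_C$ running from $v_C$ to $a$ and $\beta_C$ from $b$ to $u_C$. Given two cycles $C_1,C_2$ through $e$, the concatenation $\alpha_{C_1}\cdot e\cdot\beta_{C_2}$ --- turned into a simple path by contracting repeated vertices while keeping the edge $e$ --- is again a jump through $e$, and the cycle $\hat C$ it induces uses no negative edge of $G_M\setminus\Cp$ other than those already used by $C_1$ or $C_2$, each at most once. Thus replacing $\{C_1,C_2\}$ by $\{\hat C\}$ can only decrease the coverage of negative edges outside $\Cp$, and it strictly decreases that of $e$.

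First I would pass to a \emph{minimal} subset $S\subseteq\{C\in\mathcal{C}:e\in C\}$ with $\bigcup_{C\in S}r(C)=\bigcup_{C\in\mathcal{C}:\,e\in C}r(C)$, and discard the rest; this preserves \cref{cond:1cover_ep} (every discarded cycle covers only positive edges of $\Cp$ still covered by $S$) and does not raise the coverage of any edge. As in the proof of \cref{lem:keep_atmost2}, minimality forces every edge of $\Cp$ into at most two of the arcs $\{P_C:C\in S\}$ and forbids one of these arcs from containing another, so inside every connected arc of $\bigcup_{C\in S}r(C)$ the arcs $P_C$ form an overlapping chain. If $|S|\le2$ we are done for $e$. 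Otherwise, provided two cycles $C_1,C_2\in S$ have overlapping arcs, one of the fused cycles $\alpha_{C_1}\cdot e\cdot\beta_{C_2}$ or $\alpha_{C_2}\cdot e\cdot\beta_{C_1}$ has arc exactly $P_{C_1}\cup P_{C_2}$, a single sub-path of $\Cp$; call it $\hat C$. Then $\hat C$ meets $\Cp$ in one path (\cref{cond:4single_path}), covers $r(C_1)\cup r(C_2)$, and lowers the number of cycles of $\mathcal{C}$ through $e$ by one. Iterating over all negative edges $e$ outside $\Cp$ then proves the lemma.

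It remains to check that $\hat C$ is again non-positive. By \cref{pty:negC} a non-positive cycle $C_i$ through $e$ satisfies $|\Ep(C_i)|<\tfrac13 k$, and since $e$ is not a positive edge this splits as $|\Ep(C_i)|=|\Ep(\alpha_{C_i})|+|\Ep(\beta_{C_i})|+|\Ep(r(C_i))|$, so $|\Ep(\alpha_{C_i})|<\tfrac13 k-|\Ep(r(C_i))|$ and likewise for $\beta_{C_i}$. Since the arcs of $C_1,C_2$ overlap, $|\Ep(r(C_1)\cup r(C_2))|\le|\Ep(r(C_1))|+|\Ep(r(C_2))|$, and since the simple representative of the fused jump uses only edges of $\alpha_{C_1}\cup\{e\}\cup\beta_{C_2}$,
\begin{align*}
    |\Ep(\hat C)| &\le |\Ep(\alpha_{C_1})|+|\Ep(\beta_{C_2})|+|\Ep(r(C_1)\cup r(C_2))| \\
    &< \Bigl(\tfrac13 k-|\Ep(r(C_1))|\Bigr)+\Bigl(\tfrac13 k-|\Ep(r(C_2))|\Bigr)+|\Ep(r(C_1))|+|\Ep(r(C_2))| \;=\; \tfrac23 k ,
\end{align*}
so $\hat C$ is non-positive by \cref{pty:posC}.

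The main obstacle is guaranteeing that an overlapping pair exists once $|S|\ge3$: a priori the arcs $\{P_C:C\in S\}$ could lie in several distinct connected arcs of $\bigcup_{C\in S}r(C)$ with positive edges of $\Cp$ in the gaps between them, in which case any fusion must span a gap, enlarging the reach by those positive edges and possibly making $\hat C$ positive (invalidating the computation above). I expect this configuration to be ruled out using \cref{obs:2types_cycles}: each reach $r(C)$ contains fewer than $\tfrac13 k$ positive edges whereas $\Cp$ contains more than $\tfrac23 k$, and this, together with the rigid alternating structure of the jumps through $e$ near the vertices $a$ and $b$ (two such jumps cannot detach from $\Cp$ at the same vertex without agreeing all the way up to $a$), should forbid three cycles through $e$ whose arcs lie in pairwise-separated components. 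Turning this intuition into a clean case distinction --- together with the routine but fiddly step of making the recombined walks simple without destroying $e$ --- is the part I anticipate requiring the most care.
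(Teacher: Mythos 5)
Your proposal follows the same ``fusion'' strategy as the paper: for a negative edge $e$ outside $\Cp$ covered by too many cycles, split each such cycle at $e$ into two half-jumps, recombine a half of one cycle with a half of another, extract a simple representative, and close it up along $\Cp$. Your computation for the \emph{overlapping} case is also essentially the paper's: when $r(C_1)$ and $r(C_2)$ intersect, the fused cycle has at most $|\Ep(C_{Q_1})|+|\Ep(C_{Q_2})|<\tfrac23 k$ positive edges and is therefore non-positive by \cref{pty:posC}.

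However, there is a genuine gap exactly where you flag it, and your planned fix goes in the wrong direction. You write that you ``expect'' the configuration of three cycles through $e$ with pairwise \emph{disjoint} arcs on $\Cp$ to be ruled out by \cref{obs:2types_cycles}, but nothing you say supports this, and the paper does not rule it out either --- it handles it head-on. Moreover, after restricting to a minimal subfamily $S$ via \cref{lem:keep_atmost2}, nothing prevents the arcs $\{P_C : C\in S\}$ from living in several disjoint components of $\bigcup_{C\in S} r(C)$, so an overlapping pair need not exist when $|S|\ge 3$. The paper's resolution is an averaging argument over the \emph{three cyclic fusions}: order the three arcs $r(Q_1),r(Q_2),r(Q_3)$ anticlockwise on $\Cp$, form $C_{ij}$ for each $(i,j)\in\{(1,2),(2,3),(3,1)\}$ by joining the appropriate half-jumps and closing along $\Cp[r(Q_i),r(Q_j)]$, and observe that
\[
|\Ep(C_{12})|+|\Ep(C_{23})|+|\Ep(C_{31})|
\;\le\; |\Ep(\Cp)|+\sum_{i=1}^3 |\Ep(C_{Q_i})|
\;<\; k+3\cdot\tfrac13 k \;=\; 2k,
\]
because each of $Q_1,Q_2,Q_3$, each $r(Q_i)$, and $\Cp$ itself is charged exactly once over the three fusions. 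If all three $C_{ij}$ were positive, \cref{pty:posC} would force their total strictly above $3\cdot\tfrac23 k=2k$, a contradiction; hence at least one $C_{ij}$ is non-positive and can replace $\{C_{Q_i},C_{Q_j}\}$. This is the missing step in your argument; without it, the proof does not close. (Your ``minimal $S$ with overlapping chain'' machinery, borrowed from the proof of \cref{lem:get_cond3}, is not used by the paper here and does not salvage the disjoint-arcs case.)
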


\begin{proof}
    We show how to reduce three non-positive cycles of $\mathcal{C}$ containing an edge $e \in \En(G_M \setminus \Cp)$ into two non-positive cycles such that, when replacing the three cycles by the two new cycles, \cref{cond:4single_path,cond:1cover_ep} still hold. By repeating this transformation, the number of cycles containing a negative edge $e \in \En(G_M \setminus \Cp)$ decreases until eventually at most two cycles in $\mathcal{C}$ contain $e$. 
    
    Fix an edge $e \in \En(G_M \setminus \Cp)$ and three non-positive cycles $C_{Q_1}$, $C_{Q_2}$ and $C_{Q_3}$ in $\mathcal{C}$ all containing $e$. For $i \in \{1, 2, 3\}$, by \cref{cond:4single_path}, $C_{Q_i}$ intersects $\Cp$ in a single path. Hence let $r(Q_i) = C_{Q_i} \cap \Cp$ be this path and let $Q_i := C_{Q_i} \setminus \Cp$ be the remaining path of $C_{Q_i}$ not intersecting $\Cp$. Let also $s_i$ and $f_i$ be the first and last edge of $Q_{i}$.
    We can assume w.l.o.g.\ that the first edge of $r(Q_i)$ appears anti-clockwise on $\Cp$ in the order $1, 2, 3$. 
    
    For $(i, j) \in \{(1, 2), (2, 3), (3, 1)\}$ consider the walk $W_{ij} = Q_j[s_j, e] \cup Q_i(e, f_i]$ that goes from $s_j$ to $f_i$. This is a valid walk since $e \in  Q_1 \cap Q_2 \cap Q_3$. $W_{ij}$ can use edges multiple times so we identify $W_{ij}$ to its multi-edge set. As illustrated in \cref{fig:walk_to_path}, we can extract a simple path $P_{ij} \subset W_{ij}$ that goes from $s_j$ to $f_i$. Let $C_{ij} = P_{ij} \cup \Cp[r(Q_i), r(Q_j)]$. $C_{ij}$ is a simple directed cycle since $P_{ij}$ is a simple path from $s_j$ to $f_i$ outside of $\Cp$ and $\Cp[r(Q_i), r(Q_j)]$ is a simple path on $\Cp$ from the endpoint of $f_i$ on $\Cp$ to the endpoint of $s_j$ on $\Cp$. 
    An example of this construction is shown in \cref{fig:cycle_fusion}.  If $C_{ij}$ is a non-positive cycle, then we can replace $C_{Q_i}$ and $C_{Q_j}$ by $C_{ij}$ in $\mathcal{C}$ and \cref{cond:4single_path,cond:1cover_ep} would still hold. Indeed, by construction $C_{ij}$ intersects $\Cp$ on the single path $\Cp[r(Q_i), r(Q_j)]$ so \cref{cond:4single_path} holds. \cref{cond:1cover_ep} holds because every edge $e \in \Ep(\Cp)$ that is contained in $C_{Q_i}$ or $C_{Q_j}$ is in fact in $r(Q_i)$ or $r(Q_j)$, which are both contained in $C_{ij}$. Hence it remains to show that there exists a pair $(i, j) \in \{(1, 2), (2, 3), (3, 1)\}$ such that $C_{ij}$ is a non-positive cycle. 
    \input{figures/walk_to_path}
    \input{figures/cycle_fusion}
    Suppose there exists a pair $(i, j) \in \{(1, 2), (2, 3), (3, 1)\}$ such that $r(Q_i)$ and $r(Q_j)$ are intersecting. Then we have:
    \begin{align*}
        |\Ep(C_{ij})| &= |\Ep(P_{ij})| + |\Ep(\Cp[r(Q_i), r(Q_j)])| \\
        &= |\Ep(P_{ij})| + |\Ep(r(Q_i) \cup r(Q_j))|\\
        &\leq |\Ep(W_{ij})| + |\Ep(r(Q_i))| + |\Ep(r(Q_j))| \tag*{\text{(because $P_{ij} \subset W_{ij}$)}} \\
        &= |\Ep(Q_j[s_j, e])| + |\Ep(Q_i(e, f_i])| + |\Ep(r(Q_i))| + |\Ep(r(Q_j))|  \\
        &\leq |\Ep(Q_j)| + |\Ep(Q_i)| + |\Ep(r(Q_i))| + |\Ep(r(Q_j))|  \\
        &= |\Ep(C_{Q_i})| + |\Ep(C_{Q_j})|  \\
        &< 2 \cdot \frac{1}{3}k \tag*{(by \cref{pty:negC})} \\
    \end{align*}
    where the second equality follows from $\Cp[r(Q_i), r(Q_j)] \subset r(Q_i) \cup r(Q_j)$.
    By \cref{pty:posC}, $C_{ij}$ cannot be positive and hence is a non-positive cycle, so we get the desired result.

    If none of the paths $r(Q_1), r(Q_2), r(Q_3)$ are intersecting then assume for the sake of contradiction that $C_{12}, C_{23}$ and $C_{31}$ are all positive. By \cref{pty:posC} this means that 
    $$|\Ep(C_{12})| + |\Ep(C_{23})| + |\Ep(C_{31})| > 3 \cdot \frac{2}{3}k = 2k.$$
    However we also have for any pair $(i, j) \in \{(1, 2), (2, 3), (3, 1)\}$ that
    \begin{align*}
        |\Ep(C_{ij})| &= |\Ep(P_{ij})| + |\Ep(\Cp[r(Q_i), r(Q_j)])| \\
        &\leq |\Ep(W_{ij})| + |\Ep(\Cp[r(Q_i), r(Q_j)])| \tag*{\text{(because $P_{ij} \subset W_{ij}$)}}\\
        &= |\Ep(Q_j[s_j, e])| + |\Ep(Q_i(e, f_i])| + |\Ep(\Cp[r(Q_i), r(Q_j)])|
    \end{align*}
    Hence the total number of positive edges in the three cycles is:
    \begin{align*}
        & |\Ep(C_{12})| + |\Ep(C_{23})| +|\Ep(C_{31})| \\
        &\quad \leq |\Ep(Q_2[{s_2}, e])| + |\Ep(Q_1(e, {f_1}])| + |\Ep(\Cp[r(Q_1), r(Q_2)])| \\
        &\quad \quad + |\Ep(Q_3[{s_3}, e])| + |\Ep(Q_2(e, {f_2}])| + |\Ep(\Cp[r(Q_2), r(Q_3)])| \\
        &\quad \quad + |\Ep(Q_1[{s_1}, e])| + |\Ep(Q_3(e, {f_3}])| + |\Ep(\Cp[r(Q_3), r(Q_1)])| \\
        &\quad = |\Ep(Q_1)| + |\Ep(Q_2)| + |\Ep(Q_3)| \\
        &\quad \quad + |\Ep(\Cp)| + |\Ep(r(Q_1))| + |\Ep(r(Q_2))| + |\Ep(r(Q_3))| \\
        &\quad = |\Ep(\Cp)| + |\Ep(C_{Q_1})| + |\Ep(C_{Q_2})| + |\Ep(C_{Q_3})| \\ 
        &\quad < k + 3 \cdot \frac{1}{3}k = 2k \tag*{\text{(by  \cref{claim:Cp_bounds,pty:negC})}}
    \end{align*}
    where we can apply \cref{pty:negC} because $C_{Q_i}$ is non-positive for any $i \in \{1, 2, 3\}$.
    Hence there has to exists a pair $(i, j) \in \{(1, 2), (2, 3), (3, 1)\}$ such that $C_{ij}$ is non-positive.
\end{proof}



\subsection{Constructing a target set}\label{subsec:summarize-proof-at-the-end}
We can now prove \cref{lem:good_Me}.
\begin{proof}[Proof of \cref{lem:good_Me}]
    For the sake of contradiction let $\tup$ be a critical tuple. Apply Lemmas \ref{lem:get14}, \ref{lem:cycle_fusion} and \ref{lem:get_cond3} in that order consecutively and get a target set $\mathcal{C}$. By \cref{lem:contradiction}, this leads to a contradiction. Thus $\tup$ cannot be a critical tuple.
\end{proof}
As a consequence, \cref{lem:good_Me} and \cref{lem:correctness} together show that
\cref{algo:3approx} is a polynomial-time algorithm that  always outputs a perfect matching $M$ containing between $\frac{1}{3}k$ and $k$ red edges. This completes the proof of \cref{thm:3approx}.

\section{Conclusion}
In this paper we formally define {\EMO}, an optimization variant of the \textsc{Exact Matching} problem and show a deterministic polynomial-time approximation algorithm for {\EMO} which achieves an approximation ratio of 3 on bipartite graphs. Although the algorithm is fairly simple to present, the proof of its correctness is more complex. In the second part of the algorithm we iterate over all edges $e$ and compute a perfect matching of minimum number of red edges containing $e$. Most of our work is put into proving that there exists such a matching that approximates the optimal solution of {\EMO} by a factor 3. As our calculations are tight for an approximation ratio of 3, a natural continuation of our work would be to improve this ratio, e.g.\ to 2.
We speculate that the algorithm could be improved by iterating over larger subsets of edges (of constant size) instead of one edge at a time and that such an algorithm could give a better approximation and maybe even a PTAS. The analysis of such an algorithm, however, remains quite challenging and new techniques and insights might be needed. Another open problem is to find an approximation algorithm for general graphs.

\bibliography{ref.bib}
\end{document}